\journal{Journal of Computer and System Sciences}
\newtheorem*{mtheorem}{Main Theorem}
\newtheorem*{mcorollary}{Main Corollary}
\newtheorem{theorem}{Theorem}
\newtheorem{lemma}[theorem]{Lemma}
\newtheorem{corollary}[theorem]{Corollary}
\newtheorem{conjecture}[theorem]{Conjecture}
\newtheorem{claim}{Claim}
\newtheorem{fact}[theorem]{Fact}
\newcommand{\calH}{\mathcal{H}}
\newcommand{\cH}{\mathcal{H}}
\newcommand{\calC}{\mathcal{C}}
\newcommand{\tr}{\mathsf{tr}}
\newcommand{\mtr}{\mathsf{mtr}}
\newcommand{\btr}{\mathsf{btr}}
\newcommand{\calB}{B}
\newcommand{\calA}{\mathcal{A}}
\newcommand{\cN}{\mathcal{N}}
\newcommand{\xcup}{\bigotimes}
\renewcommand{\emptyset}{\varnothing}
\renewcommand{\sharp}{\#}
\def\wrt{\emph{w.r.t.}\xspace}
\def\ie{\emph{i.e.}\xspace}
\begin{document}

\begin{frontmatter}
	
\title{Counting Minimal Transversals of $\beta$-Acyclic Hypergraphs\tnoteref{label1}}

\author[label2]{Benjamin Bergougnoux}
\ead{benjamin.bergougnoux@uca.fr}
\author[label3]{Florent Capelli}
\ead{florent.capelli@univ-lille.fr}
\author[label2]{Mamadou Moustapha Kant\'e}
\ead{mamadou.kante@uca.fr}

\address[label2]{Université Clermont Auvergne, LIMOS, CNRS, France} 

\address[label3]{Université de Lille, CRIStAL, CNRS, INRIA, France}

\tnotetext[label1]{The authors  are supported by the French Agency for Research under the GraphEN (ANR-15-CE40-0009) and AGGREG (ANR-14-CE25-0017)
  projects.}

\begin{abstract} We prove that one can count in polynomial time the number of minimal transversals of $\beta$-acyclic hypergraphs. In consequence, we can count in polynomial time the number of minimal dominating sets of strongly chordal graphs, continuing the line of research initiated in [M.M. Kanté and T. Uno, Counting Minimal Dominating Sets, TAMC'17]. 
\end{abstract}

\begin{keyword}
 Counting problem \sep Minimal transversals \sep Dominating sets \sep $\beta$-Acyclic Hypergraphs \sep Strongly chordal graphs
\end{keyword}

\end{frontmatter}

\section*{Introduction}\label{sec:intro}

A \emph{hypergraph} is a collection of subsets - called \emph{hyperedges} - of a finite ground set, and a \emph{transversal} is a subset of the ground set that intersects
every hyperedge.  In this paper, we consider the problem of counting the (inclusion-wise) minimal transversals of \emph{$\beta$-acyclic hypergraphs}. Counting problems
are usually harder than decision problems as for instance computing a (minimal) transversal of a hypergraph can be done in polynomial time while counting the number of
(minimal) transversals is $\#$P-complete \cite{Vadhan01}.  Informally, $\#$P denotes the set of functions corresponding to the number of accepting paths of a
non-deterministic Turing machine, and FP $\subseteq$ $\#$P, the set of functions computable in polynomial time. Under the assumption $\#$P $\ne$ FP, one may be interested
in classifying counting problems between those that are easy to compute, \ie, belong to FP, and those that are hard, \ie, are $\#$P-hard \cite{Valiant79}, or even hard to
approximate \cite{GoldbergGL16}.

The \textsc{Hypergraph Dualisation} problem -- a fifty years open problem -- asks for the enumeration of all (inclusion-wise) minimal transversals of a given hypergraph. It is not yet known whether this problem can be solved by an algorithm that runs in time polynomial in the size of the output despite the fact that it has been extensively studied since it has several applications in many areas such as graph theory, artificial intelligence, datamining, model-checking, network modeling, or databases (see the survey \cite{EiterGM08}). For all these applications, the minimality of the transversals is needed. Thus, even if it is not hard to see that listing all transversals of a hypergraph can be done with a polynomial delay between two solutions, there may be an exponential gap between the number of minimal transversals and the number of transversals (see for instance the non-trivial class of affine formulas), which makes efficient enumeration of minimal transversals much harder.

The problem of computing the number of minimal transversals of a hypergraph is closely related to the \textsc{Hypergraph Dualisation} problem and has also many applications in several areas, see for instance the description given in \cite{DurandH08} in the case of model checking. This problem has also applications in graph theory as it is closely related to the problem of counting the \emph{minimal dominating sets} of a graph. It turns out that these counting problems are $\#$P-complete in general, that is, it is very unlikely that it can be solved in polynomial time on every input.  However, there is a rich literature on solving this problem in polynomial time by restricting the input hypergraph to a specific class, see for example~\cite{CapelliPhd2016,KanteLMNU13,GolovachHKKSV17}.

 \paragraph{Our contributions and approach} In this paper, we follow this line of research by exhibiting a new tractable class of hypergraphs for this problem, namely, the class of $\beta$-acyclic hypergraphs. 
 The \textsc{Hypergraph Dualisation} problem was already known to be tractable for this class~\cite{EiterG2002}. 
 Moreover, it has been proved in \cite{Brault-BaronCM15} that the number of transversals of a $\beta$-acyclic hypergraph can be computed in polynomial time but the complexity of computing the number of minimal transversals was still unknown. More precisely, we prove the following theorem:
 
\begin{mtheorem}
  The number of minimal transversals of a $\beta$-acyclic hypergraph can be computed in polynomial time.
\end{mtheorem}

A direct consequence of our result is the following corollary concerning the counting of \emph{minimal dominating sets} in a subclass of chordal graphs, called \textit{strongly chordal} graphs.
\begin{mcorollary}
	The number of minimal dominating sets of a strongly chordal graph is computable in polynomial time.
\end{mcorollary}

Besides the polynomial time algorithm, the main contribution of this article is the modification of the framework considered in \cite{Capelli17} in order to count minimal
models.  The techniques used in \cite{KanteLMNU13,GolovachHKKSV17} are based on structural restrictions and as shown in \cite{Capelli17} cannot work for $\beta$-acyclic
hypergraphs.  Instead, Capelli showed in \cite{Capelli17} how to construct, from the elimination ordering of a $\beta$-acyclic hypergraph associated to a boolean formula,
a circuit whose satisfying assignments correspond to the models of the boolean formula.  Such circuits are known as \emph{decision Decomposable Negation Normal Form} in
knowledge compilation.  While the technique allows to count the models of non-monotone formulas, it cannot be used to count the minimal models.  Indeed, the branchings of
the constructed circuit do not allow to control the minimality.  We overcome this difficulty by introducing the notion of \emph{blocked transversals}, which correspond
roughly to the minimal transversals of a sub-hypergraph that are transversals of the whole hypergraph.  We then show that blocked transversals can be used to control the
minimality in the construction of the circuit.  However, this control is only possible in the case of monotone Boolean formulas, corresponding to counting the minimal
transversals of $\beta$-acyclic hypergraphs.

Because of technical definitions, we postpone the details of the algorithm in Section \ref{sec:algo}.  The paper is organised as follows. Notations and some technical
definitions are given in Section \ref{sec:def}, while \emph{blocked transversals} and intermediate lemmas are given in Section \ref{sec:blocked-tr}.  The decomposition of
$\beta$-acyclic hypergraphs proposed in \cite{Capelli17} is refined in Section \ref{sec:decbeta} to take into account \emph{blocked transversals}.  Finally, the algorithm
is given in Section \ref{sec:algo} and we talk about its consequences for the counting of minimal dominating sets in Section \ref{sec:dom}.

\section{Definitions and notations}
\label{sec:def}

 The power set of a set $V$ is denoted by $2^V$, and its cardinal is denoted by $\#V$. For two sets $A$ and $B$, we let $A\setminus B$ denote the set $\{x\in A\mid x\notin B\}$. For a ground set $V$ and subsets $\calA_1,\ldots,\calA_k$ of $2^V$, with $k\geq 2$, we let
\begin{align*}
  \xcup_{1\leq i\leq k} \calA_i & := \begin{cases}  \emptyset  & \textrm{ if $\calA_i=\emptyset$ for some $1\leq i \leq k$},\\ 
  	\{T_1\cup \dots \cup T_k \mid \forall i\leq k, \ T_i\in \calA_i\}  
    &\textrm{ otherwise}. \end{cases} 
\end{align*}
If $k=2$, we write $\calA_1\xcup \calA_2$ as usual. 
For example, if $\calA_1=\{ \{a,b\}, \{ b,c\} \}$ and $\calA_2=\{\{ d\}, \{e,f \}\}$, then $\calA_1\xcup \calA_2 = \{ \{ a,b,d  \}, \{a,b,e,f \}, \{ b,c,d \}, \{ b,c,e,f\} \}$.
\subsection{Hypergraphs}

A hypergraph $\calH$ is a collection of subsets of a finite ground set. The elements of $\calH$ are called the {\em hyperedges} of $\calH$ and the {\em vertex set} of $\calH$
is $V(\calH) := \bigcup_{e \in \calH} e$. 

Given $S \subseteq V(\calH)$, we denote by $\calH[S]$ the {\em hypergraph induced by $S$}, that is, $\calH[S] := \{e \cap S \mid e \in \calH \}$.  Any subset $\calH'$ of
$\calH$ is called a \emph{sub-hypergraph} of $\calH$.
%
Observe that if there exists $e \in \calH$ such that $e \subseteq V(\calH)\setminus S$, then $\emptyset \in \calH[S]$. We do not enforce hypergraphs to have non-empty
edges or to be non-empty. However, a hypergraph with an empty edge may behave counter-intuitively. In the following definitions, we explicitly explain the extremal cases
where $\emptyset\in\cH$ or $\calH = \emptyset$.

Given a hypergraph $\cH$ and a subset $S\subseteq V(\cH)$ of its vertex set, we denote by $\cH(S)$ the set of hyperedges of $\cH$ containing at least one vertex in $S$,
that is, $\cH(S) := \{e \in \cH \mid S\cap e\neq \emptyset\}$; to ease notations, we write $\cH(x)$ instead of $\cH(\{x\})$ for $x\in V(\calH)$.  Observe that $\cH(x)$ is the set of hyperedges containing $x$. 
For the hypergraph $\cH$ shown in Figure \ref{fig:example}, $\cH(\{d,x\})$ is the hypergraph $\{\{b,x\},\{c,x\},\{c,d\}\}$.

Given a hypergraph $\calH$, a walk between two distinct edges $e_1$ and $e_k$ is a sequence $(e_1,x_1,e_2,x_2, \ldots, e_{k-1},x_{k-1},e_k)$ such that $x_i\in e_i\cap
e_{i+1}$ for all $1\leq i \leq k-1$. Notice that, $(e_k,x_{k-1},e_{k-1}, \ldots, x_2, e_2,x_1,e_1)$ is also a walk between $e_k$ and $e_1$. A maximal set of edges of
$\calH$ that are pairwise connected by a walk is called a \emph{connected component} of $\calH$. It is worth noticing that if $C_1,\ldots, C_k$ are the connected
components of $\calH$, then $V(C_i)\cap V(C_j)=\emptyset$ for distinct $i,j$ in $\{1,\ldots, k\}$. 

A hypergraph $\calH$ is said \emph{$\beta$-acyclic} if there exists an ordering $x_1,\ldots, x_n$ of $V(\calH)$ such that for each $1\leq i \leq n$, the set
$\{e\cap \{x_i,\ldots,x_n\}\mid e\in \calH,\ x_i\in e\}$ is linearly ordered by inclusion. Such an ordering is called a \emph{$\beta$-elimination ordering}. It is well-known that every
sub-hypergraph $\calH'$ of a $\beta$-acyclic hypergraph $\calH$ are $\beta$-acyclic as well (see for instance \cite{EiterG95}).

\subsection{Transversals}
\label{sec:trans}

Let $\calH$ be a hypergraph. A transversal for $\calH$ is a subset $T \subseteq V(\calH)$ such that for every $e \in \calH$, $T \cap e \neq \emptyset$.  We denote by
$\tr(\calH)$ the set of transversals of $\calH$. Observe that if $\emptyset \in \calH$, then $\tr(\calH) = \emptyset$ as for every $T \subseteq V(\calH)$,
$\emptyset \cap T = \emptyset$ so $T$ cannot be a transversal of $\calH$.  Finally, observe that if $\calH = \emptyset$, then $\tr(\calH) = \{\emptyset\}$.

A transversal $T$ of $\calH$ is {\em minimal} if and only if for every $x \in T$, it holds that $T \setminus \{x\} \notin \tr(\calH)$. A hyperedge $e$ such that
$e \cap T = \{x\}$ is said to be {\em private for $x$ \wrt $T$}. When $T$ is clear from the context, we may refer to such hyperedges as simply privates for $x$.  The following fact follows directly from the definitions:

\begin{fact}\label{fact:minT} $T$ is a minimal transversal of a hypergraph $\cH$ if and only if $T$ is a transversal of $\cH$ and each vertex $x\in T$ has a private.
\end{fact}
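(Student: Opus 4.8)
The plan is to simply unwind the three definitions involved — transversal, minimal transversal, and private hyperedge — and verify the two implications of the biconditional. No clever idea is needed; the statement is exactly the promised immediate consequence of the definitions.

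For the forward direction, I would assume $T$ is a minimal transversal of $\cH$. Then by definition $T$ is a transversal, so the first half of the right-hand side holds. Now fix an arbitrary $x \in T$. Minimality gives $T \setminus \{x\} \notin \tr(\cH)$, so there is some $e \in \cH$ with $(T \setminus \{x\}) \cap e = \emptyset$. Since $T$ is a transversal, $T \cap e \neq \emptyset$, and the only vertex of $T$ that can lie in $e$ is $x$; hence $x \in e$ and $T \cap e = \{x\}$, i.e.\ $e$ is private for $x$ \wrt $T$. As $x$ was arbitrary, every vertex of $T$ has a private.

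For the backward direction, I would assume $T$ is a transversal of $\cH$ such that every $x \in T$ admits a private hyperedge $e_x$ with $e_x \cap T = \{x\}$. Fix $x \in T$: then $(T \setminus \{x\}) \cap e_x = (e_x \cap T) \setminus \{x\} = \emptyset$, so $T \setminus \{x\} \notin \tr(\cH)$. Since this holds for every $x \in T$ and $T$ is a transversal, $T$ is by definition a minimal transversal.

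I do not expect any genuine obstacle. The only point worth a remark is the consistency of the extremal cases flagged in Section~\ref{sec:trans}: if $\emptyset \in \cH$ then $\tr(\cH) = \emptyset$ and both sides of the equivalence fail vacuously, while if $\cH = \emptyset$ then the unique transversal is $T = \emptyset$, which is minimal, and the condition ``every $x \in T$ has a private'' holds vacuously — so the statement remains correct in both degenerate situations.
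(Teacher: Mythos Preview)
Your proof is correct and matches the paper's treatment: the paper gives no explicit proof, merely asserting that the fact ``follows directly from the definitions,'' and your unwinding of the definitions of transversal, minimal transversal, and private hyperedge is exactly that. The extra remark on the degenerate cases $\emptyset \in \cH$ and $\cH = \emptyset$ is a nice sanity check beyond what the paper states.
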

  
We denote by $\mtr(\calH)$ the set of minimal transversals of $\calH$. Again, observe
that if $\calH = \emptyset$ then $\mtr(\calH) = \{\emptyset\}$.

Figure \ref{fig:example} depicts a hypergraph together with its minimal transversals.  

\begin{figure}[!h]
	\centerline{\includegraphics[scale=0.95]{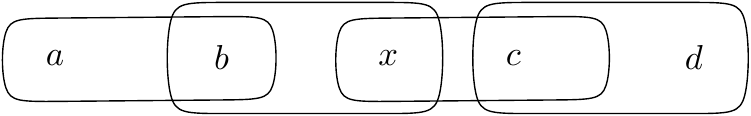}}
	\caption{A hypergraph $\cH=\{ \{a,b\}, \{b,x \}, \{x,c\} ,\{c,d\}\}$ having $4$ minimal transversals. We have $\mtr(\cH)=\{ \{ a,x,d\}, \{a,x,c \}, \{b,x,d\}, \{b,c\} \}$.}
\label{fig:example}
\end{figure}

It is worth noticing that since the sets $\tr(\cH)$ and $\mtr(\cH)$ are sets of subsets of $2^{V(\cH)}$, they may be seen as hypergraphs on $V(\cH)$. Thus, we will sometimes use the notations $\tr(\cH)(S)$
(resp. $\mtr(\cH)(S)$) which refer to the transversals (resp. minimal transversals) $T$ of $\cH$ such that $S \cap T \neq \emptyset$. 

\section{Blocked transversals}\label{sec:blocked-tr}

Our algorithm uses a dynamic programming approach by finding a relation between the number of minimal transversals of a $\beta$-acyclic hypergraph  $\cH$ with the number of minimal transversals of some specific subhypergraphs of $\cH$. However, it is not possible to directly relate these quantities together. To illustrate this fact, let $\cH$ be the hypergraph depicted in Figure \ref{fig:example}.  The minimal transversals of $\cH$ containing $x$ are
$\{ a,x,c\}, \{a,x,d \}$ and $\{b,x,d \}$.  Observe that removing $x$ from these sets directly yields  a minimal transversal of
$\cH\setminus \cH(x)=\{\{a,b\} ,\{c,d\} \}$.  However, adding $x$ to a minimal transversal of $\cH\setminus \cH(x)$ does not give necessarily a minimal transversal of $\cH$. 
For example, we have $\{b,c\} \in \mtr(\cH\setminus \cH(x))$, but $\{b,x,c\} \notin \mtr(\cH)$.

In fact, we can show in general that $T\cup x \in \mtr(\cH)$ if and only if $T$ is a minimal transversal of $\cH\setminus\cH(x)$ and $T$ is not a transversal of $\cH$.
Consequently, the number of minimal transversals of $\cH$ containing $x$ is  
\begin{align}\label{eq:1} \sharp \mtr(\calH\setminus \calH(x)) - \sharp (\tr(\cH)\cap \mtr(\calH\setminus \calH(x))). \end{align}
We can infer from this fact a recursive formula to compute $\sharp\mtr(\cH)$.
In the general case, using this formula as it is will lead to the computation of an exponential number of terms. In this section, we will make this relation more precise by introducing the notion of \emph{blocked transversal}. In the next section, we will show how we can use this relation to evaluate the number of minimal transversals of a $\beta$-acyclic hypergraphs  with only a polynomial number of intermediate values.

Given a hypergraph $\calH$, a sub-hypergraph $\calH' \subseteq \calH$ and $\calB \subseteq V(\calH)$, we define the {\em $\calB$-blocked transversals} of $\calH'$ to be
the transversals $T$ of $\calH'$ such that each vertex $x$ of $T$ has a private in $\calH' \setminus \calH'(\calB)$.  In particular, if $y$ is a vertex of $\calB$,
then $y$ cannot be in a $\calB$-blocked transversal of $\calH'$. Observe also that if $y \in V(\calH) \setminus V(\calH')$, then $y$ cannot be in a $B$-blocked
transversal of $\cH'$.

\begin{fact}\label{fact:blocked} Given a sub-hypergraph $\cH'$ of a hypergraph $\cH$ and $B\subseteq V(\cH)$, $T$ is a $B$-blocked transversal of $\cH'$ if and only if
  $T$ belongs to $\tr(\cH')\cap \mtr(\cH'\setminus \cH'(B))$.
\end{fact}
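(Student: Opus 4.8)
The plan is to unfold both sides of the equivalence directly from the definitions and check they match, since Fact~\ref{fact:blocked} is essentially a reformulation of the definition of $\calB$-blocked transversals using Fact~\ref{fact:minT}. First I would fix a sub-hypergraph $\cH'$ of $\cH$ and a set $B\subseteq V(\cH)$, and let $T$ be a $B$-blocked transversal of $\cH'$. By definition $T$ is a transversal of $\cH'$, so $T\in\tr(\cH')$; it remains to see that $T$ is a minimal transversal of $\cH'\setminus\cH'(B)$. The key observation is that a hyperedge $e$ of $\cH'$ lies in $\cH'\setminus\cH'(B)$ precisely when $e\cap B=\emptyset$, and in that case, since no vertex of $T$ lies in $B$ (that is part of being $B$-blocked), we have $e\cap T = e\cap(T\setminus B) $, so $e$ is a private for $x$ with respect to $T$ in $\cH'$ if and only if it is a private for $x$ with respect to $T$ in $\cH'\setminus\cH'(B)$.

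Given this, I would argue both implications. For the forward direction: $T$ is a transversal of $\cH'$; since every vertex $x\in T$ has a private $e\in\cH'\setminus\cH'(B)$, in particular $e\cap T=\{x\}$, so $T$ still meets every edge of $\cH'\setminus\cH'(B)$ (it meets the larger family $\cH'$), hence $T\in\tr(\cH'\setminus\cH'(B))$, and by Fact~\ref{fact:minT} the existence of a private for each $x\in T$ inside $\cH'\setminus\cH'(B)$ gives $T\in\mtr(\cH'\setminus\cH'(B))$. Conversely, suppose $T\in\tr(\cH')\cap\mtr(\cH'\setminus\cH'(B))$. Then $T$ is a transversal of $\cH'$. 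By Fact~\ref{fact:minT} applied to $\cH'\setminus\cH'(B)$, each $x\in T$ has a private hyperedge $e$ with $e\in\cH'\setminus\cH'(B)$ and $e\cap T=\{x\}$; this is exactly a private for $x$ in $\cH'\setminus\cH'(B)$ as required by the definition of $B$-blocked transversal. One also needs $T\cap B=\emptyset$: if some $x\in T\cap B$, then every hyperedge of $\cH'$ containing $x$ is in $\cH'(B)$, hence not in $\cH'\setminus\cH'(B)$, so $x$ could not have a private in $\cH'\setminus\cH'(B)$, but it must since $x\in T\in\mtr(\cH'\setminus\cH'(B))$ — contradiction. So $T$ is a $B$-blocked transversal of $\cH'$.

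The only mildly delicate point — and the one I would state carefully rather than the ``main obstacle'' — is the handling of the extremal/degenerate cases flagged in Section~\ref{sec:def}: if $\emptyset\in\cH'$ then $\tr(\cH')=\emptyset$ and there are no $B$-blocked transversals either, so both sides are empty; and if $\cH'\setminus\cH'(B)=\emptyset$ (for instance if $\cH'=\emptyset$, or every edge of $\cH'$ meets $B$), then $\mtr(\cH'\setminus\cH'(B))=\{\emptyset\}$, and on the other side the only candidate $B$-blocked transversal is $T=\emptyset$, which is vacuously fine and is a transversal of $\cH'$ iff $\cH'$ has no edges not meeting $\ldots$ — here one checks that $\emptyset\in\tr(\cH')$ forces $\cH'=\emptyset$, matching. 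Since the correspondence is a direct double-inclusion argument via Fact~\ref{fact:minT}, there is no genuine obstacle; the proof is short, and the care lies in not overlooking that ``$B$-blocked'' already bakes in $T\cap B=\emptyset$, which is what makes privates in $\cH'$ and in $\cH'\setminus\cH'(B)$ coincide.
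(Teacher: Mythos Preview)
Your proof is correct and follows essentially the same approach as the paper's: both directions unfold the definition and invoke Fact~\ref{fact:minT}. The explicit check that $T\cap B=\emptyset$ in the converse and the degenerate-case discussion are superfluous, since the definition of $B$-blocked transversal only asks that $T\in\tr(\cH')$ and that each vertex of $T$ have a private in $\cH'\setminus\cH'(B)$; disjointness from $B$ is a consequence rather than something to verify.
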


\begin{proof} If $T$ is a transversal of $\cH'$ such that each vertex $x$ of $T$ has a private in $\cH'\setminus \cH'(B)$, then $T\subseteq V(\cH'\setminus \cH'(B))$ and
  is a transversal of $\cH'\setminus \cH'(B)$. Thus by Fact \ref{fact:minT} $T$ is a minimal transversal of $\cH'\setminus \cH'(B)$.  Conversely, if $T$ is a minimal
  transversal of $\cH'\setminus \cH'(B)$, then by Fact \ref{fact:minT} each vertex $x$ has a private in $\cH'\setminus \cH'(B)$. If in addition it belongs to the set of
  transversals of $\cH'$, then it is a $B$-blocked transversal of $\cH'$ by definition.
\end{proof}

As an example, let $\cH$ be the hypergraph depicted in Figure \ref{fig:example}, $\cH'=\cH$ and $B=\{x\}$. The only $B$-blocked transversal of $\cH$ is
$\{b,c\}$. While $\{b,d\}$ is a minimal transversal of $\cH\setminus \cH(x)$, it is not a $B$-blocked transversal as the hyperedge $\{x,c\}$ does not intersect
$\{b,d\}$. 

\medskip

We call the set $\calH'(\calB)$ the {\em blocked hyperedges}. Intuitively, $\calH'(\calB)$ is the set of hyperedges that cannot be used as privates in a transversal of
$\cH'$. We denote by $\btr_{\calB}(\calH')$ the set of $\calB$-blocked transversals of $\calH'$. By Fact \ref{fact:blocked}, we have
\[ \btr_\calB(\calH') := \tr(\calH') \cap \mtr(\calH'\setminus \calH'(\calB)). \] Observe that by definition, $\mtr(\calH) = \btr_\emptyset(\calH)$. Moreover, if
$\calH(\calB) = \calH \neq \emptyset$, then $\btr_{\calH}(\calH) = \emptyset$ as $\mtr(\emptyset) = \{\emptyset\}$ and $\emptyset \notin \tr(\calH)$.  When $\calB=\{x\}$,
we denote $\btr_\calB(\calH)$ by $\btr_x(\calH)$. 

Given $S \subseteq V(\calH)$, we denote by $\tr(\calH,S) := \{T \in \tr(\calH) \mid T \subseteq S \}$. We extend this notation to $\mtr$ and $\btr$ as well.  The
following summarises observations about blocked transversals.

\begin{fact}\label{fact:prop-blocked} Let $\cH'$ be a sub-hypergraph of a hypergraph $\cH$ and $B,S\subseteq V(\cH)$. Then,
  \begin{enumerate}
  \item $\btr_\calB(\calH') = \btr_{\calB\cap V(\cH')}(\calH')$.
  \item $\btr_\calB(\calH',S)=\btr_\calB(\calH',S\setminus\calB)$.
  \item If $x\notin V(\calH')$, then $\btr_\calB(\calH',S)=\btr_\calB(\calH',S\setminus\{x\})$.
  \end{enumerate}
\end{fact}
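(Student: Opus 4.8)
The plan is to prove each of the three items directly from the characterization of $\btr_\calB$ given in Fact \ref{fact:blocked}, namely $\btr_\calB(\calH') = \tr(\calH') \cap \mtr(\calH' \setminus \calH'(\calB))$, unwinding the definitions and using the $S$-restriction notation $\btr_\calB(\calH', S) = \{T \in \btr_\calB(\calH') \mid T \subseteq S\}$. All three are elementary, but item 2 is the one that requires an actual argument rather than a bookkeeping observation.

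For item 1, I would argue that $\calH'(\calB)$ depends only on which vertices of $\calB$ actually occur in $V(\calH')$: by definition $\calH'(\calB) = \{e \in \calH' \mid e \cap \calB \neq \emptyset\}$, and since every $e \in \calH'$ satisfies $e \subseteq V(\calH')$, we have $e \cap \calB = e \cap (\calB \cap V(\calH'))$, hence $\calH'(\calB) = \calH'(\calB \cap V(\calH'))$. Plugging this into the formula of Fact \ref{fact:blocked} immediately gives $\btr_\calB(\calH') = \btr_{\calB \cap V(\calH')}(\calH')$.

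For item 2, the inclusion $\btr_\calB(\calH', S \setminus \calB) \subseteq \btr_\calB(\calH', S)$ is trivial since $S \setminus \calB \subseteq S$. For the reverse inclusion, let $T \in \btr_\calB(\calH', S)$, so $T \subseteq S$ and $T \in \btr_\calB(\calH')$; I would invoke the remark made right after the definition of $\calB$-blocked transversals — that no vertex of $\calB$ can belong to a $\calB$-blocked transversal of $\calH'$, because such a vertex needs a private hyperedge in $\calH' \setminus \calH'(\calB)$, yet every hyperedge containing a vertex of $\calB$ lies in $\calH'(\calB)$ and has been removed. Hence $T \cap \calB = \emptyset$, so $T \subseteq S \setminus \calB$, giving $T \in \btr_\calB(\calH', S \setminus \calB)$. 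Item 3 is the special case where we additionally know $x \notin V(\calH')$: since $T \in \btr_\calB(\calH')$ forces $T \subseteq V(\calH')$ (again by the remark following the definition: a vertex outside $V(\calH')$ cannot have a private in $\calH'$), any such $T$ with $T \subseteq S$ in fact satisfies $T \subseteq S \setminus \{x\}$, and conversely $S \setminus \{x\} \subseteq S$ gives the other inclusion.

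I do not anticipate a genuine obstacle here; the only subtlety worth being careful about is making sure the ``no vertex of $\calB$ in a $\calB$-blocked transversal'' and ``$T \subseteq V(\calH')$'' facts are cleanly available — they are stated in the paragraph defining $\calB$-blocked transversals, so I would simply cite that remark. Everything else is a one-line set-theoretic manipulation, and the proof can be written in a few sentences per item.
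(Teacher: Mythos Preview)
Your proposal is correct. The paper does not actually give a proof of this fact: it is stated as a summary of observations, and the two key ingredients you rely on --- that a vertex of $\calB$ cannot belong to a $\calB$-blocked transversal of $\calH'$, and that any $\calB$-blocked transversal of $\calH'$ is contained in $V(\calH')$ --- are precisely the remarks made in the paragraph defining $\calB$-blocked transversals. Your write-up simply fleshes out those remarks into explicit arguments, which is exactly the intended reading.
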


Let us briefly explain how $B$-blocked transversals will be used in computing $\#\mtr(\cH)$. One checks easily that
$\#\mtr(\cH) = \#\mtr(\cH,V(\cH)\setminus \{x\}) + \#\mtr(\cH)(x)$. By Equation \ref{eq:1}, $\#\mtr(\cH)(x)=\#\mtr(\cH\setminus \cH(x)) - \#\btr_x(\cH)$.  Therefore,
$\#\mtr(\cH) = \#\btr_\emptyset(\cH) = \#\btr_\emptyset(\cH,V(\cH)\setminus \{x\}) + \#\btr_\emptyset(\cH\setminus \cH(x)) - \#\btr_x(\cH)$. We will show in the next
section that we can define, from the $\beta$-elimination ordering of a $\beta$-acyclic hypergraph $\cH$, sub-hypergraphs $\cH_1, \ldots, \cH_q\subseteq \cH$ and vertices
$x_1,\ldots,x_q$ such that, for all $1\leq i \leq q$, $\#\btr_{x_i}(\cH_i)$ and $\#\btr_\emptyset(\cH_i)$ can be computed in polynomial time if $\#\btr_{x_j}(\cH_j)$ and
$\#\btr_\emptyset(\cH_j)$ are known for all $j< i$. As a consequence, one can compute $\#\mtr(\cH)$ by classical dynamic programming for any $\beta$-acyclic
hypergraph.

The end of this section is dedicated to the proof of several crucial lemmas concerning recursive formulas for computing the number of blocked transversals, and that will
be useful in our algorithm. We start by describing the blocked transversals of a hypergraph having more than one connected component.

\begin{lemma}
  \label{lem:btrcc} Let $\calH$ be a hypergraph, $S,\calB \subseteq V( \calH)$ and $C_1,\ldots,C_k$ the connected components  of $\calH[S]$. For each $i\in[1,k]$, let $\calH_i=\{e\in\calH \mid e\cap S\in C_i \}$. We have:
\[ \btr_\calB(\calH,S) = \xcup_{i=1}^k \btr_{\calB}(\calH_i,S).\]
\end{lemma}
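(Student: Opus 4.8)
The plan is to show both inclusions of the claimed set equality, using Fact~\ref{fact:blocked} to translate the statement about $B$-blocked transversals into a statement about transversals and minimal transversals, and then exploiting the fact that the connected components $C_1,\dots,C_k$ of $\calH[S]$ partition the ``relevant'' vertices and hyperedges. The key structural observation I would establish first is that a subset $T\subseteq S$ is a transversal of $\calH$ restricted to the hyperedges meeting $S$ if and only if, writing $T_i := T\cap V(C_i)$, each $T_i$ is a transversal of $\calH_i$ restricted to $S$; and since $V(C_i)\cap V(C_j)=\emptyset$ for $i\neq j$ (noted in the excerpt right after the definition of connected component), this decomposition $T = T_1\cup\dots\cup T_k$ is unique. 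I would also want the matching statement for privacy: a hyperedge $e$ of $\calH$ with $e\cap S\in C_i$ is private for $x\in T$ exactly when it is private for $x$ viewed inside $\calH_i$, which again uses the disjointness of the vertex sets of distinct components together with the fact that a vertex of $T\subseteq S$ lies in exactly one $V(C_i)$.

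Concretely, I would proceed as follows. First reduce to $S$-restricted objects: by Fact~\ref{fact:prop-blocked}(2) and the definition of $\btr_\calB(\cdot,S)$, an element of $\btr_\calB(\calH,S)$ is a $T\subseteq S\setminus \calB$ that is a transversal of $\calH$ and such that every $x\in T$ has a private in $\calH\setminus\calH(\calB)$; note that a private is a hyperedge meeting $T$ in a single vertex of $T\subseteq S$, hence a hyperedge meeting $S$, so only the hyperedges in $\bigcup_i \calH_i$ — equivalently the hyperedges whose trace on $S$ is nonempty — matter both for the transversality condition and for the privacy condition. (Hyperedges $e$ with $e\cap S=\emptyset$ cannot be hit by $T\subseteq S$ at all; if such an $e$ exists then $\emptyset\in\calH[S]$, and one should check this edge case makes both sides empty consistently — indeed then no $T\subseteq S$ is a transversal of $\calH$, and correspondingly each $C_i$ contains $\emptyset$ so each $\btr_\calB(\calH_i,S)=\emptyset$, making the $\xcup$ empty.) Second, for the forward inclusion, given such a $T$, set $T_i=T\cap V(C_i)$; verify that $T_i$ is a transversal of $\calH_i$ (every $e\in\calH_i$ has $e\cap S\in C_i\subseteq V(C_i)$, so $e$ is hit by $T$ only within $V(C_i)$, i.e.\ by $T_i$), that $T_i\subseteq S\setminus\calB$, and that each $x\in T_i$ keeps its private $e\in\calH\setminus\calH(\calB)$ inside $\calH_i\setminus\calH_i(\calB)$ (the private $e$ satisfies $e\cap T=\{x\}$ with $x\in V(C_i)$, so $e\cap S$ meets $V(C_i)$, hence $e\cap S\in C_i$ since components are maximal walk-connected sets, so $e\in\calH_i$). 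Hence $T_i\in\btr_\calB(\calH_i,S)$ and $T=T_1\cup\dots\cup T_k\in\xcup_i\btr_\calB(\calH_i,S)$. Third, for the reverse inclusion, take $T_i\in\btr_\calB(\calH_i,S)$ for each $i$ and let $T=T_1\cup\dots\cup T_k$; transversality of $\calH$ follows because every hyperedge meeting $S$ lies in some $\calH_i$ and is hit by $T_i$, while hyperedges disjoint from $S$ only occur in the degenerate case handled above (where no $T_i$ exists, so there is nothing to prove); and each $x\in T$ lies in a unique $T_i$, its private in $\calH_i\setminus\calH_i(\calB)\subseteq\calH\setminus\calH(\calB)$ serves as a private in the big hypergraph because no vertex of $T_j$ with $j\neq i$ can lie on a hyperedge of $\calH_i$ (disjoint vertex sets). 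So $T\in\btr_\calB(\calH,S)$.

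The main obstacle I anticipate is not conceptual but bookkeeping: being careful about the distinction between a hyperedge $e$ of $\calH$ and its trace $e\cap S$, since $\calH_i$ is defined by the trace landing in $C_i$ but privacy and transversality are phrased in terms of the original hyperedges; the point that makes it work is that for $T\subseteq S$ we have $e\cap T=(e\cap S)\cap T$, so everything can be pushed through the trace. A secondary subtlety is the empty-edge / empty-component case, which the excerpt deliberately flags as behaving ``counter-intuitively''; I would dispatch it explicitly at the start as above (if some $e\in\calH$ has $e\cap S=\emptyset$, both sides are empty) so that in the main argument I may assume every $C_i$ consists of nonempty traces and every hyperedge of $\calH$ meets $S$ — in which case $\bigcup_i\calH_i=\calH$ and the partition picture is clean. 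With these two points settled, the two inclusions are straightforward from the definitions and from $V(C_i)\cap V(C_j)=\emptyset$.
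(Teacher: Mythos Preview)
Your proposal is correct and follows essentially the same approach as the paper: handle the degenerate case $\emptyset\in\calH[S]$ first, then prove both inclusions by restricting/assembling along the components $C_i$ using $V(C_i)\cap V(C_j)=\emptyset$ and the observation $e\cap T=(e\cap S)\cap T$ for $T\subseteq S$. One small slip to fix: in the degenerate case it is not that \emph{each} $C_i$ contains $\emptyset$, but that \emph{some} $C_i$ equals $\{\emptyset\}$ (so $\btr_\calB(\calH_i,S)=\emptyset$ for that $i$), which already forces the $\xcup$ to be empty by its definition.
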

\begin{proof}
  Assume that $\emptyset \in \calH[S]$, it means that there exists a hyperedge $e\in\calH$ such that $e\cap S=\emptyset$. In this case, $\btr_\calB(\calH,S)=\emptyset$. Moreover, there exists
  $i\in [1,k]$ such that $C_i=\{\emptyset\}$ and $e\in\calH_i$. Thus, $\btr_{\calB}(\calH_i,S)=\emptyset$ and the equality holds in this case.

  Assume from now that $\emptyset\notin \calH[S]$.  Let $T \in \btr_\calB(\calH,S)$. We show that for all $i \leq k$, $T_i = T \cap V(\calH_i) \in \btr_{\calB}(\calH_i,S)$. Let $e \in \calH_i$. Since
  $e \in \calH$, we have $e\cap T \neq \emptyset$.  As $e\in\calH_i$, we have $e \subseteq V(\calH_i)$.  Thus $e\cap T_i\neq \emptyset$, that is, $T_i \in \tr(\calH_i,S)$.  Moreover, let $y \in T_i$.
  By definition of $T$, there exists $e \in \calH \setminus \calH(\calB)$ such that $e$ is private for $y$ \wrt $T$. Observe that we have $T_i\subseteq S \cap V(\calH_i) = V(C_i)$ since
  $T\subseteq S$. Thus, $y\in V(C_i)$ and $e\cap S \in C_i$, because $C_i$ is a connected component. We can conclude that $e\in \calH_i$ and $e$ is private to $y$ \wrt $T_i$ and
  $\calH_i \setminus \calH(\calB) = \calH_i \setminus \calH_i(\calB)$. Thus $T_i$ is a minimal transversal of $\calH_i \setminus \calH_i(\calB)$. That is $T_i \in \btr_{\calB}(\calH_i,S)$.

  Now let $T_1 \in \btr_{\calB}(\calH_1,S), \ldots, T_k \in \btr_{\calB}(\calH_k,S)$.  We show that $T = \bigcup_{i=1}^k T_i \in \btr_\calB(\calH,S)$.  Let $e \in \calH$. As
  $\calH = \bigcup_{i=1}^k \calH_i$, there exists $i$ such that $e \in \calH_i$.  Thus $e \cap T_i \neq \emptyset$ and thus $e \cap T \neq \emptyset$, that is, $T \in \tr(\calH)$. It remains to show
  that $T\in\mtr(\calH\setminus \calH(\calB))$. Let $y \in T$.  By definition of $T$, there exists $i$ such that $y \in T_i$.  Thus, there exists $e \in \calH_i \setminus \calH_i(\calB)$ that is
  private for $y$ \wrt $T_i$.  Moreover, since $\calH_i(\calB) = \calH_i \cap \calH(\calB)$, we know that $e \notin \calH(\calB)$.  As $C_1,\dots,C_k$ are the connected component of $\calH[S]$, we
  have that, for every $j \neq i$, $V(C_i)\cap V(C_j)=\emptyset$. Moreover, for all $\ell \leq k$, we have $S\cap V(\calH_\ell)=V(C_\ell)$. As $e\subseteq V(\calH_i)$ and
  $T_j\subseteq S \cap V(\calH_j)$, we have for every $j\neq i$:
  \[  e \cap T_j\subseteq V(\calH_i)\cap S\cap V(\calH_j) = V(C_i)\cap V(C_j)=\emptyset. \] 
  Thus, $e\cap T = e \cap T_i = \{y\}$. In other words, $e$ is private for $y$ \wrt $T$ and $\calH \setminus \calH(\calB)$. That is $T\in \btr_{\calB}(\calH)$.
\end{proof}

We recall that $\btr_B(\cH,S)(x)$ is the set of $B$-blocked transversals $T$ of $\cH$ such that $T\subseteq S$ and $x\in T$. The following lemma shows that for any
$B$-blocked transversal $T\subseteq S$ of $\cH$ containing $x$, we have $T\setminus x$ is a $B$-blocked transversal of $\cH\setminus \cH(x)$.

\begin{lemma}
\label{lem:btrinc}
  Let $\calH$ be a hypergraph, $S,\calB \subseteq V(\calH)$ and $x \in S$. 
  We have \[\btr_\calB(\calH,S)(x) \subseteq \{\{x\}\} \xcup \btr_{\calB}(\calH\setminus \cH(x),S\setminus\{x\}).\]
\end{lemma}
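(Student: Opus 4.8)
The plan is to take an arbitrary $T \in \btr_\calB(\calH,S)(x)$ and show that $T \setminus \{x\}$ lies in $\btr_\calB(\calH \setminus \calH(x), S \setminus \{x\})$; the inclusion into $\{\{x\}\} \xcup \btr_\calB(\calH \setminus \calH(x), S \setminus \{x\})$ then follows immediately since $x \in T$ so $T = \{x\} \cup (T \setminus \{x\})$. Write $T' := T \setminus \{x\}$. First I would handle membership in $\tr(\calH \setminus \calH(x))$: every hyperedge $e \in \calH \setminus \calH(x)$ satisfies $x \notin e$ by definition of $\calH(x)$, and $e \cap T \neq \emptyset$ because $T$ is a transversal of $\calH$, so $e \cap T' = e \cap T \neq \emptyset$. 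The containment $T' \subseteq S \setminus \{x\}$ is clear since $T \subseteq S$. It remains to verify that $T'$ is a minimal transversal of $(\calH \setminus \calH(x)) \setminus (\calH \setminus \calH(x))(\calB)$, for which, by Fact \ref{fact:minT} (given $T'$ is already a transversal of the sub-hypergraph), it suffices to show that each $y \in T'$ has a private there.

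The key point is private-preservation. Fix $y \in T'$, so $y \in T$ and $y \neq x$. Since $T \in \btr_\calB(\calH,S)$, there is a hyperedge $e \in \calH \setminus \calH(\calB)$ with $e \cap T = \{y\}$. I would first observe $x \notin e$: indeed $e \cap T = \{y\}$ and $x \in T$ with $x \neq y$, so $x \notin e$, which means $e \in \calH \setminus \calH(x)$. Consequently $e \cap T' = e \cap T = \{y\}$ (removing $x$ changes nothing since $x \notin e$ anyway), so $e$ is private for $y$ with respect to $T'$. Finally I must check $e$ is not a blocked hyperedge of the new hypergraph, i.e. $e \notin (\calH \setminus \calH(x))(\calB)$; but $(\calH \setminus \calH(x))(\calB) \subseteq \calH(\calB)$ and $e \notin \calH(\calB)$ by choice of $e$, so we are done. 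This shows $T' \in \mtr\big((\calH \setminus \calH(x)) \setminus (\calH \setminus \calH(x))(\calB)\big)$, hence $T' \in \btr_\calB(\calH \setminus \calH(x), S \setminus \{x\})$ by Fact \ref{fact:blocked} together with the size constraint $T' \subseteq S \setminus \{x\}$.

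The main obstacle, such as it is, is purely bookkeeping: one must be careful that the two notions of "restriction by $\calB$" agree, namely that $(\calH \setminus \calH(x)) \setminus (\calH \setminus \calH(x))(\calB)$ equals $(\calH \setminus \calH(x)) \setminus \calH(\calB)$ — this holds because for any sub-hypergraph $\calH''$ one has $\calH''(\calB) = \calH'' \cap \calH(\calB)$, the same identity used in the proof of Lemma \ref{lem:btrcc}. No $\beta$-acyclicity is needed, and the proof is a short direct argument; the only genuine idea is the observation that a private hyperedge for $y \neq x$ cannot contain $x$, so it survives the deletion of $\calH(x)$ and remains private.
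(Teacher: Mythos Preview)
Your proof is correct and follows essentially the same route as the paper's: take $T' = T \setminus \{x\}$, check it is a transversal of $\calH \setminus \calH(x)$, and then use the key observation that a private hyperedge for $y \neq x$ \wrt $T$ cannot contain $x$, hence survives into $\calH_1 \setminus \calH_1(\calB) = \calH \setminus (\calH(\calB) \cup \calH(x))$. The only cosmetic difference is that you spell out the bookkeeping identity $(\calH \setminus \calH(x))(\calB) = (\calH \setminus \calH(x)) \cap \calH(\calB)$ and invoke Facts~\ref{fact:minT} and~\ref{fact:blocked} explicitly, whereas the paper just writes the equality $\calH_1 \setminus \calH_1(\calB) = \calH \setminus (\calH(\calB) \cup \calH(x))$ inline.
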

\begin{proof} Let $\calH_1 := \calH \setminus \calH(x)$.  Let $T \in \btr_\calB(\calH,S)(x)$. By definition, $x \in T$ and $T\subseteq S$, thus we only have to show that
  $T' = T \setminus \{x\} \in \btr_{\calB}(\calH_1)$. Let $e \in \calH \setminus \calH(x)$. Since $T$ is a transversal of $\calH$, there exists $y \in e \cap T$.
  Moreover, by definition, $x \notin e$, thus $y \in T'$, that is, $T'$ is a transversal of $\calH \setminus \calH(x)$. It remains to show that $T'$ is a minimal
  transversal of $\calH_1\setminus \calH_1(\calB) = \calH \setminus (\calH(\calB) \cup \calH(x))$. Let $y \in T'$. Since $T$ is a minimal transversal of
  $\calH \setminus \calH(\calB)$, there exists $e \in \calH \setminus \calH(\calB)$ such that $e$ is private for $y$ \wrt $T$. Since $x \in T$, we have $x \notin e$,
  otherwise $e$ would not be private for $y$ \wrt $T$. Thus $e \in \calH \setminus (\calH(\calB) \cup \calH(x))$, that is, $e$ is private to $y$ \wrt $T'$ in
  $\calH_1\setminus \calH_1(\calB)$. In other words, $T'$ is a minimal transversal of $\calH_1\setminus \calH_1(\calB)$ which concludes the proof.
\end{proof}

To complete the previous lemma, we show that for each $B$-blocked transversal $T\subseteq S$ of $\cH\setminus\cH(x)$, we have $T\cup \{x\}$ is a $B$-blocked transversal
of $\cH$ if and only if $T$ is not a $(B\cup \{x\})$-blocked transversal of $\cH\setminus (\cH(B)\cap \cH(x))$.

\begin{lemma}
\label{lem:btreq}
  Let $\calH$ be a hypergraph, $S,\calB \subseteq V(\calH)$ and $x \in S$. We have 
  \[ \big(\{\{x\}\} \xcup \btr_{\calB}(\calH_1,S\setminus\{x\})\big) \setminus \btr_{\calB}(\calH,S)(x) = \{\{x\}\} \xcup \btr_{\calB \cup \{x\}}(\calH_2,S\setminus\{x\})  \] where 
 $\calH_1 := \calH \setminus \calH(x)$ and $\calH_2 := \calH \setminus (\calH(\calB) \cap \calH(x))$. 
\end{lemma}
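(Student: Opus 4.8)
The plan is to prove the set equality by establishing the two inclusions, working element-by-element with sets of the form $\{x\} \cup T$. First I would fix $T' = \{x\} \cup T$ in the left-hand side, so that $T \in \btr_\calB(\calH_1, S \setminus \{x\})$ but $T' \notin \btr_\calB(\calH, S)(x)$; since $T'$ contains $x$ and $T' \subseteq S$, the second condition means precisely that $T'$ fails to be a $\calB$-blocked transversal of $\calH$. Using Fact~\ref{fact:blocked} and the fact that $T$ is already a transversal of $\calH_1 = \calH \setminus \calH(x)$, the only way $T'$ can fail to be a transversal of $\calH$ is \dots impossible --- $T' \supseteq \{x\}$ meets every hyperedge of $\calH(x)$ and $T$ meets every hyperedge of $\calH_1$. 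So $T'$ \emph{is} a transversal of $\calH$, and the failure must be in minimality: some vertex $y \in T'$ has no private in $\calH \setminus \calH(\calB)$ \wrt $T'$. Here $y = x$ is impossible, since a private for $x$ \wrt $T'$ in $\calH \setminus \calH(\calB)$ would be a hyperedge $e \in \calH(x)$ with $e \cap T = \emptyset$ and $e \notin \calH(\calB)$; I claim such an $e$ always exists — this is the crux and I will come back to it. Granting that, the failing vertex is some $y \in T$, which has a private $e \in \calH_1 \setminus \calH_1(\calB)$ \wrt $T$ (as $T \in \btr_\calB(\calH_1, S \setminus \{x\})$), but $e$ is \emph{not} private \wrt $T'$ in $\calH \setminus \calH(\calB)$; since $x \notin e$ (as $e \in \calH_1$), the only possibility is $e \in \calH(\calB)$ — wait, that contradicts $e \notin \calH_1(\calB) = \calH_1 \cap \calH(\calB)$. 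The resolution: the privates of $y$ that survive in $\calH_1 \setminus \calH_1(\calB)$ are hyperedges of $\calH \setminus \calH(x)$ avoiding $\calB$; such a hyperedge is automatically in $\calH \setminus \calH(\calB)$ and avoids $x \in T'$, so it stays private \wrt $T'$. Hence $y$ \emph{cannot} fail — so actually the left-hand side decomposes correctly only once we route the failure through the $x$-privates. Let me restate: $T' \notin \btr_\calB(\calH,S)(x)$ forces the failure to be at $x$, i.e.\ \emph{every} hyperedge containing $x$ that avoids $T$ must lie in $\calH(\calB)$, equivalently every $e \in \calH(x)$ with $e \cap T' = \{x\}$ satisfies $e \cap \calB \neq \emptyset$, i.e.\ $e \in \calH(\calB) \cap \calH(x)$. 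This is exactly the statement that $x$ has no private \wrt $T'$ in $\calH_2 = \calH \setminus (\calH(\calB) \cap \calH(x))$ either — but we need the stronger conclusion that \emph{no} vertex of $T'$ has a private in $\calH_2 \setminus \calH_2(\calB \cup \{x\})$, which unwinds to $T$ being a $(\calB \cup \{x\})$-blocked transversal of $\calH_2$.

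For the forward inclusion, then, I would argue: $T' \notin \btr_\calB(\calH,S)(x)$ $\iff$ $x$ has no private \wrt $T'$ in $\calH \setminus \calH(\calB)$ (by Fact~\ref{fact:minT}, using that $T'$ is a transversal of $\calH$ and every $y \neq x$ in $T'$ keeps its $\calB$-avoiding private from $\calH_1$). Now $x$ has no private \wrt $T'$ in $\calH \setminus \calH(\calB)$ means: for all $e \in \calH(x)$, if $e \cap T' = \{x\}$ then $e \in \calH(\calB)$, i.e.\ $e \in \calH(\calB) \cap \calH(x)$. I claim this is equivalent to $T \in \btr_{\calB \cup \{x\}}(\calH_2, S \setminus \{x\})$. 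By Fact~\ref{fact:blocked}, the latter says $T \in \tr(\calH_2)$ and each $z \in T$ has a private in $\calH_2 \setminus \calH_2(\calB \cup \{x\})$. One checks $\calH_2 \setminus \calH_2(\calB \cup \{x\}) = (\calH \setminus (\calH(\calB) \cap \calH(x))) \setminus (\calH(\calB) \cup \calH(x)) = \calH \setminus (\calH(\calB) \cup \calH(x)) = \calH_1 \setminus \calH_1(\calB)$, so the "each $z$ has a private" part is verbatim the condition $T \in \mtr(\calH_1 \setminus \calH_1(\calB))$, which holds since $T \in \btr_\calB(\calH_1, S \setminus \{x\})$. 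So $T \in \btr_{\calB \cup \{x\}}(\calH_2, S \setminus\{x\})$ $\iff$ $T \in \tr(\calH_2)$ (the rest is free), and $T \in \tr(\calH_2)$ $\iff$ $T$ meets every hyperedge of $\calH \setminus (\calH(\calB) \cap \calH(x))$; since $T$ already meets every hyperedge of $\calH_1 = \calH \setminus \calH(x)$, this is equivalent to $T$ meeting every $e \in \calH(x) \setminus (\calH(\calB)\cap\calH(x)) = \calH(x) \setminus \calH(\calB)$, i.e.\ every $x$-hyperedge avoiding $\calB$ is hit by $T$ — which is precisely "$x$ has no private \wrt $T'$ in $\calH \setminus \calH(\calB)$." This closes both directions simultaneously, since each step was an equivalence.

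The reverse inclusion is then essentially the same chain of equivalences read backwards, plus checking the side conditions: if $T \in \btr_{\calB \cup \{x\}}(\calH_2, S \setminus \{x\})$ then I need $T' = \{x\} \cup T \in \{\{x\}\} \xcup \btr_\calB(\calH_1, S \setminus \{x\})$, i.e.\ $T \in \btr_\calB(\calH_1, S\setminus\{x\})$ — which follows because $\btr_{\calB\cup\{x\}}(\calH_2, \cdot) \subseteq \mtr(\calH_1 \setminus \calH_1(\calB))$ by the identity above, and $T$ is a transversal of $\calH_1$ since $\calH_1 \subseteq \calH_2 \cup \calH(x)$... I need to be careful: $T \in \tr(\calH_2)$ does not immediately give $T \in \tr(\calH_1)$ because $\calH_1$ is not a subset of $\calH_2$. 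But each $z \in T$ has a private in $\calH_1 \setminus \calH_1(\calB)$, which in particular is a hyperedge of $\calH_1$ hit by $z$; that shows nothing about \emph{other} hyperedges of $\calH_1$. The right fix: $T \in \mtr(\calH_1 \setminus \calH_1(\calB))$ means $T \in \tr(\calH_1 \setminus \calH_1(\calB))$, and I separately need $T$ to hit the hyperedges in $\calH_1(\calB) = \calH_1 \cap \calH(\calB)$; these are hyperedges of $\calH_2$ (they are in $\calH$ and not in $\calH(x)$, hence not in $\calH(\calB) \cap \calH(x)$), so $T \in \tr(\calH_2)$ handles them. Thus $T \in \tr(\calH_1)$, hence $T \in \btr_\calB(\calH_1, S\setminus\{x\})$ by Fact~\ref{fact:blocked}. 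And finally $T' \notin \btr_\calB(\calH, S)(x)$ follows from the equivalence established above.

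The main obstacle, and the one place to be genuinely careful, is the set-algebra of the blocked-hyperedge operators: the identities $\calH_2 \setminus \calH_2(\calB \cup \{x\}) = \calH_1 \setminus \calH_1(\calB) = \calH \setminus (\calH(\calB) \cup \calH(x))$ and $\calH_i(\cdot) = \calH_i \cap \calH(\cdot)$ are the load-bearing computations, and sorting out exactly which hyperedges lie in $\calH_1$ versus $\calH_2$ (note $\calH_1 \not\subseteq \calH_2$ in general, since $\calH_2$ keeps $x$-hyperedges that meet $\calB$ while $\calH_1$ discards all $x$-hyperedges) is where a naive argument goes wrong. Everything else is a routine application of Facts~\ref{fact:minT} and~\ref{fact:blocked} once these identities are in place, and I would organize the write-up as a single sequence of "$\iff$" steps so that both inclusions fall out together, handling the transversal side conditions as small separate verifications.
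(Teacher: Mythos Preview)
Your proposal is correct and follows essentially the same route as the paper's proof: both directions hinge on the identity $\calH_1 \setminus \calH_1(\calB) = \calH_2 \setminus \calH_2(\calB \cup \{x\}) = \calH \setminus (\calH(\calB) \cup \calH(x))$, and both reduce the question to whether $x$ has a private in $\calH \setminus \calH(\calB)$ with respect to $T' = T \cup \{x\}$. Your organisation as a chain of equivalences is a mild stylistic variant of the paper's two separate claims.

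One small slip to fix in the write-up: you assert that $\calH_1 \not\subseteq \calH_2$ in general, but in fact $\calH_1 \subseteq \calH_2$ always holds (if $e \notin \calH(x)$ then $e \notin \calH(\calB) \cap \calH(x)$), and the paper uses exactly this inclusion to get $T \in \tr(\calH_1)$ directly from $T \in \tr(\calH_2)$. Your detour through $\calH_1(\calB) \subseteq \calH_2$ is correct but unnecessary; the parenthetical about which $x$-hyperedges $\calH_2$ keeps is also backwards (it keeps the $x$-hyperedges that \emph{avoid} $\calB$, not those that meet it).
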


\begin{proof} We prove the lemma by proving first the left-to-right inclusion (Claim~\ref{clm:lir}) and then the right-to-left inclusion (Claim~\ref{clm:ril}). But first,
  notice that $\calH_1\setminus\calH_1(\calB)=\calH_2\setminus\calH_2(\calB \cup \{x\})=\calH\setminus(\calH(\calB)\cup \calH(x))$ since
  $\calH(\calB)\cup\calH(x)=\calH(\calB\cup\{x\})$. 

\begin{claim}
\label{clm:lir}
For every $T \in \big(\{\{x\}\} \xcup \btr_{\calB}(\calH_1,S\setminus\{x\})\big) \setminus \btr_{\calB}(\calH,S)(x)$, we have $ T \setminus \{x\} \in \btr_{\calB \cup \{x\}}(\calH_2,S\setminus\{x\})$. 
\end{claim}
\begin{proof}
  We start by proving that $T' =T\setminus \{x\}$ is in $\tr(\calH_2)$. Assume towards a contradiction that $T'\notin \tr(\calH_2)$, \ie, there exists $e \in \calH_2$
  such that $e\cap T'=\emptyset$. We prove that it implies $T \in \btr_B(\calH,S)(x)$.  First, observe that $T\in \tr(\calH)$, since
  $T'\in \tr(\calH_1)=\tr(\calH\setminus\calH(x))$ and $T=T'\cup \{x\}$.  Thus, we have $e\cap T=\{x\}$ and $e\in\cH(x)$. As
  $e\in\calH_2=\calH\setminus (\calH(\calB)\cap \calH(x))$, we have $e\in \calH\setminus\calH(\calB)$ and then $e$ is a private hyperedge for $x$ \wrt $T$ and
  $\calH\setminus\calH(\calB)$.  Furthermore, every vertex in $T'$ has a private hyperedge \wrt $T'$ and
  $\calH_1\setminus\calH_1(\calB)=\calH\setminus(\calH(\calB)\cup \calH(x))$ since $T'\in \mtr(\calH_1\setminus\calH_1(\calB))$.  Thus, every vertex in $T'$ has a private
  hyperedge \wrt $T$ and $\calH\setminus\calH(\calB)$.  As $T\in\tr(\calH)$, we can conclude that $T\in\mtr(\calH\setminus\calH(\calB))$.  Finally, we have
  $T \subseteq S$ by assumption. Therefore $T \in \btr_\calB(\calH,S)(x)$ which is a contradiction. Thus, $T'\in \tr(\calH_2)$.


We now prove that $T' \in \mtr(\calH_2 \setminus \calH_2(\calB \cup \{x\}))$, that is, we prove the minimality of $T'$ in $\calH_2 \setminus \calH_2(\calB \cup \{x\})$. Let $y \in T'$. Since $T \in \btr_{\calB}(\calH_1)$, there exists $f \in \calH_1 \setminus \calH_1(\calB)$ such that $f \cap T = \{y\}$. Since $\calH_1 \setminus \calH_1(\calB) = \calH_2\setminus \calH_2(\calB \cup \{x\})$, every $y \in T'$ have a private hyperedge in $\calH_2 \setminus \calH_2(\calB \cup \{x\})$, that is $T' \in \mtr(\calH_2 \setminus \calH_2(\calB \cup \{x\}))$. As $T'\subseteq S\setminus\{x\}$, we can conclude that $T'\in \btr_{\calB \cup \{x\}}(\calH_2,S\setminus\{x\})$.
\end{proof}
\begin{claim}
\label{clm:ril}
For every $T \in \{x\} \bigotimes \btr_{\calB \cup \{x\}}(\calH_2,S\setminus\{x\})$, we have 
$T \in \{x\} \bigotimes \btr_{\calB}(\calH_1,S\setminus\{x\}) \setminus \btr_{\calB}(\calH,S)(x)$.
\end{claim}

\begin{proof} We start by proving that $T' = T \setminus \{x\}$ is in $\btr_{\calB}(\calH_1,S\setminus\{x\})$. First, we show that $T'$ is a transversal of $\calH_1$. Let
  $e \in \calH_1$. By definition of $\calH_1$, $x \notin e$, thus $e \in \calH_2$ as well. Therefore $e \cap T' \neq \emptyset$.  We now prove that $T'$ is minimal in
  $\calH_1 \setminus \calH_1(\calB)$. As $T'\in\mtr(\calH_2\setminus\calH_2(\calB \cup \{x\}))$, every vertex in $T'$ has a private hyperedge in
  $\calH_2\setminus\calH_2(\calB \cup \{x\})$. Moreover, recall that $\calH_2\setminus\calH_2(\calB \cup \{x\})=\calH_1\setminus\calH_1(\calB)$. Thus, $T'$ is minimal in
  $\calH_1 \setminus \calH_1(\calB)$. As $T'\subseteq S\setminus\{x\}$, we have $T'\in \btr_{\calB}(\calH_1,S\setminus \{x\})$.

We finish the proof by showing that $T \notin \btr_\calB(\calH,S)(x)$. In order to prove it, we show that there is no private hyperedge for $x$ \wrt $T$ and $\calH \setminus
\calH(\calB)$. Indeed, since $T'\in\tr(\calH_2)$, every hyperedge in $\calH_2$ contains a vertex in $T'$. By definition of $\calH_2$, we have $\calH \setminus
\calH(\calB)\subseteq\calH_2$, thus for every hyperedge $e$ in $\calH \setminus \calH(\calB)$, we have $e\cap T \neq \{x\}$, \ie, $e$ is not a private for $x$.
\end{proof}

By Claim \ref{clm:lir}  and Claim \ref{clm:ril} we can conclude the lemma. 
\end{proof} 

Finally, we characterise the number of $B$-blocked transversals of a hypergraph that do not contain a given vertex. We use the symbol $\uplus$ for the disjoint union of
sets. 
\begin{lemma}
  \label{lem:btrnotx} Let $\calH$ be a hypergraph, $S,\calB \subseteq V(\calH)$ and $x \in S$. We have 
  \[ \btr_{\calB}(\calH,S) = \btr_{\calB}(\calH,S)(x) \uplus  \btr_{\calB}(\calH, S\setminus\{x\}). \]
\end{lemma}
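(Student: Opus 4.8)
The plan is to recognize that the claimed identity is nothing more than the partition of $\btr_{\calB}(\calH,S)$ into the $\calB$-blocked transversals of $\calH$ that contain $x$ and those that do not. So the entire argument is a routine unwinding of the notation $\btr_{\calB}(\calH,S') = \{T \in \btr_{\calB}(\calH) \mid T \subseteq S'\}$, with no appeal to the earlier lemmas needed.

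First I would establish disjointness of the two sets on the right-hand side. Every $T \in \btr_{\calB}(\calH,S)(x)$ satisfies $x \in T$ by definition of the $(x)$-notation, whereas every $T \in \btr_{\calB}(\calH,S\setminus\{x\})$ satisfies $T \subseteq S\setminus\{x\}$ and hence $x \notin T$; thus no set can lie in both, which justifies writing the union as a disjoint union $\uplus$.

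Next I would prove the set equality $\btr_{\calB}(\calH,S) = \btr_{\calB}(\calH,S)(x) \cup \btr_{\calB}(\calH,S\setminus\{x\})$ by the two inclusions. For $\supseteq$, note that $\btr_{\calB}(\calH,S)(x) \subseteq \btr_{\calB}(\calH,S)$ holds immediately from the definition of the $(x)$-notation, and $\btr_{\calB}(\calH,S\setminus\{x\}) \subseteq \btr_{\calB}(\calH,S)$ holds because $S\setminus\{x\} \subseteq S$, so any element of $\btr_{\calB}(\calH)$ contained in $S\setminus\{x\}$ is in particular an element of $\btr_{\calB}(\calH)$ contained in $S$. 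For $\subseteq$, take $T \in \btr_{\calB}(\calH,S)$, that is $T \in \btr_{\calB}(\calH)$ with $T \subseteq S$; if $x \in T$ then $T \in \btr_{\calB}(\calH,S)(x)$ by definition, and if $x \notin T$ then $T \subseteq S\setminus\{x\}$, hence $T \in \btr_{\calB}(\calH,S\setminus\{x\})$. In either case $T$ lies in the union, which finishes the argument.

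I do not expect any genuine obstacle here; the only point deserving a moment's care is that, unlike the analogous statements in Lemmas \ref{lem:btrinc} and \ref{lem:btreq} where the hypergraph $\calH$ is replaced by $\calH\setminus\calH(x)$, here $\calH$ itself is untouched and only the ground-set restriction shrinks from $S$ to $S\setminus\{x\}$, so the containment $\btr_{\calB}(\calH,S\setminus\{x\}) \subseteq \btr_{\calB}(\calH,S)$ is automatic and no blocked transversal is created or destroyed in the process.
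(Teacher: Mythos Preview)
Your proof is correct and follows essentially the same approach as the paper: partition $\btr_{\calB}(\calH,S)$ according to whether $x$ belongs to $T$ or not. You are slightly more explicit than the paper, spelling out both inclusions and the disjointness separately, whereas the paper argues only the forward inclusion and notes the two cases are exclusive; but the content is identical.
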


\begin{proof} Let $T\subseteq S$ be a $B$-blocked transversal of $\cH$. Thus, either $x\in T$ and then $T\in \btr_B(\cH,S)(x)$, or $x\notin T$ and then $T\subseteq
  S\setminus \{x\}$, \ie, $T\in \btr_B(\cH,S\setminus \{x\})$. Since, the two cases are exclusive, we can conclude that $\btr_B(\cH,S)$ is the disjoint union of
  $\btr_B(\cH,S)(x)$ and of $\btr_B(\cH,S\setminus \{x\})$.
\end{proof}

A direct consequence of Lemma~\ref{lem:btrinc}, Lemma~\ref{lem:btreq} and Lemma~\ref{lem:btrnotx} is the
following equality characterising the number of $B$-blocked transversals of $\calH$.  This will be a crucial step in our dynamic programming scheme.

\begin{theorem}
\label{thm:inductivestep}
Let $\calH$ be a hypergraph, $S,\calB \subseteq V(\calH)$ and $x \in S$. We have  
\[ \#\btr_{\calB}(\calH,S) = \#\btr_{\calB}(\calH,S\setminus\{x\}) + \#\btr_{\calB}(\calH_1,S\setminus\{x\}) - \#\btr_{\calB \cup \{x\}}(\calH_2,S\setminus\{x\}) \] 
where $\calH_1 := \calH \setminus \calH(x)$ and $\calH_2 := \calH \setminus (\calH(\calB) \cap \calH(x))$.
\end{theorem}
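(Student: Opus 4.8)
The plan is to assemble the three preceding lemmas into the claimed arithmetic identity, turning set equalities into equalities of cardinalities. First I would apply Lemma~\ref{lem:btrnotx}, which decomposes $\btr_\calB(\calH,S)$ as the disjoint union of $\btr_\calB(\calH,S)(x)$ and $\btr_\calB(\calH,S\setminus\{x\})$. Taking cardinalities yields
\[ \#\btr_{\calB}(\calH,S) = \#\btr_{\calB}(\calH,S)(x) + \#\btr_{\calB}(\calH,S\setminus\{x\}), \]
so it remains only to show that $\#\btr_\calB(\calH,S)(x) = \#\btr_{\calB}(\calH_1,S\setminus\{x\}) - \#\btr_{\calB\cup\{x\}}(\calH_2,S\setminus\{x\})$.

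For this I would use Lemma~\ref{lem:btrinc}, which gives the inclusion $\btr_\calB(\calH,S)(x) \subseteq \{\{x\}\}\xcup\btr_{\calB}(\calH_1,S\setminus\{x\})$. Since $A\subseteq B$ implies $\#A = \#B - \#(B\setminus A)$ for finite sets, this gives
\[ \#\btr_\calB(\calH,S)(x) = \#\big(\{\{x\}\}\xcup\btr_{\calB}(\calH_1,S\setminus\{x\})\big) - \#\Big(\big(\{\{x\}\}\xcup\btr_{\calB}(\calH_1,S\setminus\{x\})\big)\setminus\btr_\calB(\calH,S)(x)\Big). \]
Then I would invoke Lemma~\ref{lem:btreq}, by which the subtracted set is exactly $\{\{x\}\}\xcup\btr_{\calB\cup\{x\}}(\calH_2,S\setminus\{x\})$; hence both terms on the right-hand side are cardinalities of sets of the form $\{\{x\}\}\xcup\calA$.

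Finally I would observe the elementary fact that $\#(\{\{x\}\}\xcup\calA) = \#\calA$ whenever no member of $\calA$ contains $x$: if $\calA=\emptyset$ then both sides equal $0$ by the definition of $\xcup$, and otherwise $T\mapsto T\cup\{x\}$ is a bijection from $\calA$ onto $\{\{x\}\}\xcup\calA$. This applies to $\calA=\btr_{\calB}(\calH_1,S\setminus\{x\})$, because every hyperedge of $\calH_1=\calH\setminus\calH(x)$ avoids $x$, so $x\notin V(\calH_1)$ and no blocked transversal of $\calH_1$ contains $x$; and it applies to $\calA=\btr_{\calB\cup\{x\}}(\calH_2,S\setminus\{x\})$, because $x\in\calB\cup\{x\}$ already forbids $x$ from any $(\calB\cup\{x\})$-blocked transversal (and the ground set is in any case restricted to $S\setminus\{x\}$). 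Substituting these simplifications gives $\#\btr_\calB(\calH,S)(x) = \#\btr_{\calB}(\calH_1,S\setminus\{x\}) - \#\btr_{\calB\cup\{x\}}(\calH_2,S\setminus\{x\})$, and combining with the first displayed identity completes the proof. I do not expect any genuine obstacle here — all the real content is already packaged in Lemmas~\ref{lem:btrinc} and~\ref{lem:btreq}; the only point that needs a moment's care is the degenerate case in which one of the sets $\btr_{\calB}(\calH_1,S\setminus\{x\})$ or $\btr_{\calB\cup\{x\}}(\calH_2,S\setminus\{x\})$ is empty, which is handled by the remark above.
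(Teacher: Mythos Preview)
Your argument is correct and follows exactly the paper's own route: first Lemma~\ref{lem:btrnotx} to split off the $x$-free part, then Lemmas~\ref{lem:btrinc} and~\ref{lem:btreq} combined to obtain $\#\btr_\calB(\calH,S)(x) = \#\btr_{\calB}(\calH_1,S\setminus\{x\}) - \#\btr_{\calB\cup\{x\}}(\calH_2,S\setminus\{x\})$. The only difference is that you spell out the bijection $T\mapsto T\cup\{x\}$ and the degenerate empty case, which the paper leaves implicit.
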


\begin{proof} By Lemma \ref{lem:btrnotx}, $\#\btr_{\calB}(\calH,S) = \#\btr_{\calB}(\calH,S\setminus\{x\}) + \#\btr_{\calB}(\calH,S)(x)$. By Lemma \ref{lem:btrinc} and
  Lemma \ref{lem:btreq}, $\{\{x\}\}\otimes \btr_B(\cH_1,S\setminus \{x\}) = \btr_B(\cH,S)(x)\uplus \{\{x\}\}\otimes \btr_{B\cup \{x\}}(\cH_2,S\setminus \{x\})$. Hence,
  $\#\btr_B(\cH,S)(x) = \#\btr_B(\cH_1,S\setminus \{x\}) - \#\btr_{B\cup \{x\}}(\cH_2,S\setminus \{x\})$. Therefore, the claimed equality holds.
\end{proof}

\section{Counting the minimal transversals of $\beta$-acyclic hypergraphs}
\label{sec:count}

In this section, we fix $\calH$ a $\beta$-acyclic hypergraph, $\leq$ a $\beta$-elimination ordering of its vertices and we let $\leq_\calH$ the induced lexicographic
ordering on the hyperedges, \ie, $e\leq_{\calH} f$ if $\min((e\setminus f)\cup (f\setminus e))\in e$.  We denote by $\calH_e^x$ the sub-hypergraph of $\cH$ formed by the
hyperedges $f \in \calH$ such that there exists a walk from $f$ to $e$ going only through hyperedges smaller than $e$ and vertices smaller than $x$. For an example, take
the $\beta$-elimination ordering $a,b,x,c,d$ of the hypergraph in Figure \ref{fig:example} and the induced ordering $\{a,b\}, \{b,x\},\{x,c\},\{c,d\}$ on $\cH$. For
$e=\{x,c\}$, the hypergraph $\cH_e^x$ is composed of the hyperedges $\{b,x\}$ and $\{x,c\}$.

For a vertex $x$ of $V(\calH)$, we write $[\leq x]$, $[< x]$ and $[\geq x]$ for, respectively, $\{y \in V(\cH) \mid y \leq x\}$, $\{y \in V(\cH) \mid y \leq x \wedge
y\neq x\}$ and $\{y \in V(\cH) \mid x \leq y\}$. Moreover, we write $\cH[\leq x]$, $\cH[< x]$ and $\cH[\geq x]$ instead of, respectively, $\cH\big[[\leq x]\big]$,
$\cH\big[[< x]\big]$ and $\cH\big[[\geq x]\big]$.


\subsection{Decomposition of $\beta$-acyclic hypergraphs}
\label{sec:decbeta}

 The following two lemmas have been proven in~\cite[Section III-A]{Capelli17}.

\begin{lemma}[Theorem 3 in \cite{Capelli17}]
  \label{lem:hexvar} For every hyperedge $e \in \calH$, and $x \in V(\calH)$, we have $V(\calH_e^x) \cap [\geq x] \subseteq e$.
\end{lemma}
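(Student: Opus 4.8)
The plan is to unfold the definition of $\calH_e^x$ and argue by analysing a single vertex $y \in V(\calH_e^x)$ with $y \geq x$, showing it must belong to $e$. So let $y \in V(\calH_e^x)$ with $x \leq y$. By definition of $\calH_e^x$, there is a hyperedge $f \in \calH_e^x$ with $y \in f$, and a walk $(f = f_0, z_1, f_1, z_2, \ldots, z_\ell, f_\ell = e)$ from $f$ to $e$ using only hyperedges $f_i \leq_\calH e$ and vertices $z_i < x$. First I would observe that the definition of $\calH_e^x$ is downward-closed along walks in the appropriate sense, so that each intermediate $f_i$ also lies in $\calH_e^x$.

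The key idea is to push the vertex $y$ along the walk toward $e$ using the $\beta$-elimination property. Since $\leq$ is a $\beta$-elimination ordering, for the vertex $x$ the set $\{g \cap [\geq x] \mid g \in \calH,\ x \in g\}$ is linearly ordered by inclusion; more generally, the relevant structural fact from \cite{Capelli17} (Lemma~\ref{lem:hexvar} being exactly this statement, so really this is the base result one bootstraps) is that any two hyperedges sharing a vertex $\geq x$ are comparable on $[\geq x]$. The central step is therefore: I would prove that if $f_i$ and $f_{i+1}$ are consecutive in the walk, joined at a vertex $z_{i+1} < x$, and $y \in f_i$ with $y \geq x$, then $y \in f_{i+1}$ as well — or more precisely, that the "trace on $[\geq x]$" only grows as one moves toward the lexicographically-largest edge. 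One has to be careful here: the clean statement is that among hyperedges whose trace on $[\geq x]$ is nonempty and which are mutually connected via low vertices, these traces are linearly ordered, and $e$ carries the largest such trace because $e$ is $\leq_\calH$-maximal among the $f_i$ and, on the common suffix $[\geq x]$, lexicographic maximality forces containment of all the other traces. Consequently $y \in f \cap [\geq x] \subseteq e \cap [\geq x] \subseteq e$, which is what we want.

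\textbf{Main obstacle.} The delicate point will be justifying that $e$'s trace on $[\geq x]$ dominates the traces of all the $f_i$ appearing on the walk — i.e. transporting the pairwise linear-ordering guarantee of the $\beta$-elimination ordering along an entire walk and concluding that $e$ sits at the top. Pairwise comparability of traces on $[\geq x]$ only follows directly when the two edges share a vertex $\geq x$; for edges linked through a low vertex $z_{i+1} < x$ one needs an inductive argument propagating a common vertex $\geq x$ (namely $y$ itself, once it is shown to survive each step), or alternatively an argument that the $\leq_\calH$-maximality of $e$ combined with the fact that all $f_i$ agree with $e$ below $x$ (they only branch at vertices $< x$) pins down the comparison. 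I would handle this by induction on the length of the walk, maintaining the invariant that $y$ lies in the current edge and that the current edge's trace on $[\geq x]$ is contained in $e$'s; the base case is $f_\ell = e$ and each step uses the linear order on $\{g \cap [\geq x] \mid x \in g\}$ applied at the vertex $y$ (or at $x$, after first establishing $x$ belongs to the relevant edges). Everything else — the observation that intermediate edges lie in $\calH_e^x$, and the final chain of inclusions — is routine.
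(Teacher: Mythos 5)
First, a point of comparison: the paper does not prove this statement at all --- it is imported verbatim as Theorem~3 of \cite{Capelli17} (``The following two lemmas have been proven in Section III-A of \cite{Capelli17}''), so there is no in-paper proof to measure you against. Judged on its own terms, your proposal is an outline rather than a proof, and the step that carries all of the difficulty is exactly the one you leave open. The $\beta$-elimination property, applied at a connecting vertex $z_{i+1}$ shared by two consecutive edges $f_i$ and $f_{i+1}$ of the walk, only yields that $f_i\cap[\geq z_{i+1}]$ and $f_{i+1}\cap[\geq z_{i+1}]$ are \emph{comparable}; it says nothing about the direction of the inclusion. Consequently your invariant (``$y$ lies in the current edge and the current trace on $[\geq x]$ is contained in $e$'s'') does not propagate: at a step where $f_{i+1}\cap[\geq z_{i+1}]\subsetneq f_i\cap[\geq z_{i+1}]$, the vertex $y$ may simply drop out of the walk and reappear later, so the edge-by-edge transport of $y$ is not available as stated. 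The only hypothesis that can break this symmetry is that every edge of the walk is $\leq_\calH e$, and your sentence ``lexicographic maximality forces containment of all the other traces'' is precisely the statement to be proved, not an observation. It is not even true pairwise: $f\leq_\calH e$ only says that $\min(f\triangle e)$ lies in $f$, which is perfectly compatible with $f$ containing a vertex of $[\geq x]\setminus e$ (take $f=\{a,y\}$, $e=\{b\}$ with $a<b<x\leq y$); the containment of traces only follows once the walk structure and $\beta$-acyclicity are combined.

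Closing this gap requires a genuine induction that intertwines the vertex order $\leq$ and the edge order $\leq_\calH$ --- this is the actual content of Section~III-A of \cite{Capelli17}, where this statement is established together with facts of the flavour of Lemma~\ref{lem:hexinc} --- and none of that machinery appears concretely in your sketch. What is missing is the case analysis at a step where the trace on $[\geq z_{i+1}]$ \emph{grows} as one moves away from $e$: there one must use $f_i\leq_\calH e$ together with the linear ordering at the shared vertex to derive a contradiction (or to re-route the comparison directly against $e$), and this is where the proof lives. Until that case is written out, the argument is a plan with its central step flagged as an ``obstacle'' rather than resolved.
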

\begin{lemma}[Lemma 2 in \cite{Capelli17}]
  \label{lem:hexinc} Let $e$ and $f$ be two hyperedges of $\calH$ such that $e \leq_\calH f$, and let $x$ and $y$ be vertices of $\cH$ such that $x \leq y$. If
  $V(\calH_e^x) \cap V(\calH_f^y) \cap [\leq x] \neq \emptyset$, then $\calH_e^x \subseteq \calH_f^y$.

\end{lemma}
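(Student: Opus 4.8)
The plan is to argue straight from the definition of $\cH_e^x$ — the set of hyperedges $g$ reachable from $e$ by a walk that uses only hyperedges $\leq_\cH e$ and only vertices in $[\leq x]$ — and, for each such $g$, to exhibit a walk witnessing $g\in\cH_f^y$ by surgery on walks. Fix a vertex $z\in V(\cH_e^x)\cap V(\cH_f^y)\cap[\leq x]$, which exists by hypothesis. From $z\in V(\cH_e^x)$ pick a hyperedge $h\in\cH_e^x$ with $z\in h$; by definition either $h=e$, or there is a walk $P$ from $h$ to $e$ through hyperedges $\leq_\cH e$ and vertices in $[\leq x]$. Symmetrically, from $z\in V(\cH_f^y)$ pick $h'\in\cH_f^y$ with $z\in h'$, with either $h'=f$ or a walk $P'$ from $h'$ to $f$ through hyperedges $\leq_\cH f$ and vertices in $[\leq y]$.

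Next, I would take an arbitrary $g\in\cH_e^x$, fix a walk $Q$ from $g$ to $e$ through hyperedges $\leq_\cH e$ and vertices in $[\leq x]$ (trivial when $g=e$), and splice $Q$, the reverse of $P$ (glued to $Q$ at their shared endpoint $e$), the single step $(h,z,h')$ (valid since $z\in h\cap h'$), and $P'$ into a walk $W$ from $g$ to $f$. It then remains to check that $W$ is of the form certifying $g\in\cH_f^y$. Every hyperedge occurring in $W$ either comes from $Q$ or $P$, or is $h$ — all $\leq_\cH e$, hence $\leq_\cH f$ since $e\leq_\cH f$ — or it comes from $P'$, or is $h'$ — all $\leq_\cH f$; so all hyperedges of $W$ are $\leq_\cH f$. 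Every vertex occurring in $W$ either comes from $Q$ or $P$, or is $z$ — all in $[\leq x]\subseteq[\leq y]$ since $x\leq y$ — or it comes from $P'$ — all in $[\leq y]$; so all vertices of $W$ lie in $[\leq y]$. Hence $W$ witnesses $g\in\cH_f^y$, and since $g$ was arbitrary, $\cH_e^x\subseteq\cH_f^y$.

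The step I expect to be the main obstacle is precisely this admissibility check of $W$: it is where all three hypotheses ($e\leq_\cH f$, $x\leq y$, and $z\in[\leq x]$) get used, and where one must make sure that nothing inherited from the $(e,x)$-side overshoots the bound allowed on the $(f,y)$-side — the bridge vertex $z$ being the critical case, which is exactly why the hypothesis takes the intersection inside $[\leq x]$ rather than merely inside $V(\cH)$. The remaining care is purely bookkeeping: checking that the spliced sequence is genuinely a walk in the sense of Section~\ref{sec:def} (consecutive hyperedges do share the interposed vertex, by construction), and handling the degenerate cases $g=e$, $h=e$, $h'=f$, $h=h'$, $g=f$ — and in particular $e=f$, where the statement collapses to ``$x\leq y\Rightarrow\cH_e^x\subseteq\cH_e^y$'' and uses no intersection hypothesis at all — by simply deleting the corresponding trivial sub-walks. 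No appeal to Lemma~\ref{lem:hexvar} is needed for this argument.
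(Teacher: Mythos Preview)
The paper does not prove this lemma at all: it simply quotes it from~\cite{Capelli17} (``The following two lemmas have been proven in~\cite[Section III-A]{Capelli17}''), so there is no in-paper proof to compare against. Your argument is a clean, correct walk-splicing proof straight from the definition of $\cH_e^x$: pick $h\in\cH_e^x$ and $h'\in\cH_f^y$ both containing the common vertex $z$, and for arbitrary $g\in\cH_e^x$ concatenate a $g$--$e$ walk, an $e$--$h$ walk, the step $(h,z,h')$, and an $h'$--$f$ walk; the hypotheses $e\leq_\cH f$, $x\leq y$, and $z\in[\leq x]$ are precisely what guarantee that every hyperedge of the concatenation is $\leq_\cH f$ and every vertex is $\leq y$.

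Two remarks worth recording. First, your proof never invokes $\beta$-acyclicity or Lemma~\ref{lem:hexvar}; the inclusion $\cH_e^x\subseteq\cH_f^y$ under these hypotheses is a purely combinatorial fact about the walk-closure operator and holds for \emph{any} hypergraph equipped with total orders on vertices and hyperedges --- this is a slightly stronger statement than what the paper needs, and it is nice that your argument makes this visible. Second, the bookkeeping around degenerate cases ($g=e$, $h=e$, $h'=f$, $g=f$) is indeed just a matter of collapsing trivial sub-walks, since the walk definition in Section~\ref{sec:def} only forbids equal \emph{endpoints} and places no distinctness constraint on intermediate hyperedges; your treatment of this is adequate.
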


We prove a lemma on the decomposition of $\calH_e^x$ graphs that will be used with Lemma \ref{lem:btrcc} to propagate the dynamic programming algorithm.

\begin{lemma}
	\label{lem:betadecompo}
	Let $x\in V(\calH)$, $e \in \calH$ and $S \subseteq [\geq x]$. Let 
        \begin{align*} 
\calH' &:= \begin{cases} \calH_e^x & \textrm{if $S = \emptyset$},\\  \calH_e^x \setminus \left(\bigcap_{w\in
          S}\calH_e^x(w)\right) & \textrm{otherwise}. \end{cases}
        \end{align*}
        For every connected component $C$ of $\calH'[<x]$ different from $\{\emptyset\}$, there exists $y < x$ and $f \leq_\calH e$ such that $C = \calH_{f}^y[\leq y]$
        and $\calH_f^y=\{g\in \calH' \mid g\cap [<x] \in C \}$.
\end{lemma}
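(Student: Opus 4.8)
The plan is to take a nonempty connected component $C$ of $\calH'[<x]$ and reconstruct from it a witnessing hyperedge $f$ and vertex $y$. First I would pick any hyperedge $g_0 \in \calH'$ with $g_0 \cap [<x] \in C$ and observe that since $C \neq \{\emptyset\}$, the set $g_0 \cap [<x]$ is nonempty; let $y$ be the largest vertex of $C$ (in the $\beta$-elimination order $\leq$), so $y < x$, and let $f$ be a $\leq_\calH$-minimal hyperedge among those of $\calH'$ that contain $y$ and satisfy $g \cap [<x] \in C$. The candidate is then $\calH_f^y$. The first thing to check is that $\calH_f^y \subseteq \calH'$: this should follow by walking — any $g \in \calH_f^y$ is connected to $f$ through hyperedges $\leq_\calH f \leq_\calH e$ and vertices $< y < x$, so the walk stays inside $\calH_e^x$ by definition of $\calH_e^x$; and since all these hyperedges meet $C$ (connectivity in $\calH'[<x]$, via Lemma~\ref{lem:hexvar} which forces their intersections with $[<x]$ to live in the component of $f$), none of them can be killed by the intersection $\bigcap_{w\in S}\calH_e^x(w)$ — here I would use that $S \subseteq [\geq x]$ together with Lemma~\ref{lem:hexvar} applied to $\calH_f^y$, giving $V(\calH_f^y) \cap [\geq x] \subseteq f$, to argue about which $w \in S$ can belong to which hyperedge. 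Then $C = \calH_f^y[\leq y]$: the inclusion $\calH_f^y[\leq y] \subseteq C$ comes from the walk argument again (everything in $\calH_f^y$ restricted to $[<x]$ is connected to $f \cap [<x] \in C$), while $C \subseteq \calH_f^y[\leq y]$ needs that every hyperedge $h$ of $\calH'$ with $h \cap [<x] \in C$ actually lies in $\calH_f^y$ — this is where Lemma~\ref{lem:hexinc} enters: by minimality of $f$ and the definition of $y$ as the maximum of $C$, one shows $V(\calH_h^{y'}) \cap V(\calH_f^y) \cap [\leq y'] \neq \emptyset$ for the appropriate $y' \leq y$, hence $\calH_h^{y'} \subseteq \calH_f^y$ and in particular $h \in \calH_f^y$.

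The last equality $\calH_f^y = \{g \in \calH' \mid g \cap [<x] \in C\}$ packages the two inclusions just discussed: the $\subseteq$ direction is $\calH_f^y \subseteq \calH'$ together with the fact (from the walk argument) that each $g \in \calH_f^y$ has $g \cap [<x]$ in the component of $f \cap [<x]$, which is $C$; the $\supseteq$ direction is exactly $C \subseteq \calH_f^y[\leq y]$ lifted back to hyperedges, i.e. every $g \in \calH'$ with $g \cap [<x] \in C$ belongs to $\calH_f^y$, which was the content of the Lemma~\ref{lem:hexinc} step.

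I expect the main obstacle to be the bookkeeping around the deletion $\bigcap_{w \in S}\calH_e^x(w)$: one must be careful that deleting the common neighbourhood of $S$ from $\calH_e^x$ does not accidentally remove a hyperedge of $\calH_f^y$, and conversely that $\calH_f^y$ computed inside $\calH'$ coincides with $\calH_f^y$ computed inside the ambient $\calH$ (the walks defining $\calH_f^y$ only use hyperedges $\leq_\calH f$ and vertices $<y$, so one needs them all to survive in $\calH'$). The key leverage is Lemma~\ref{lem:hexvar}: since every hyperedge $g$ reachable in the walk has $V(\calH_g^{y'}) \cap [\geq x] \subseteq g \cap [\geq x] \subseteq e$, and $S \subseteq [\geq x]$, membership of a vertex $w \in S$ in such a hyperedge is already determined at the level of $\calH_e^x$, so no hyperedge of $\calH_f^y$ is in $\bigcap_{w\in S}\calH_e^x(w)$ unless $f$ itself is — which is excluded since $f \in \calH'$. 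A secondary subtlety is handling the degenerate reductions (when $\emptyset \in \calH'[<x]$, i.e. some hyperedge of $\calH'$ misses $[<x]$ entirely): such a hyperedge forms the component $\{\emptyset\}$, which the statement explicitly excludes, so it causes no trouble but should be mentioned.
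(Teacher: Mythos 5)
There is a genuine error in your construction of the witness $f$. You take $f$ to be a $\leq_\calH$-\emph{minimal} hyperedge among those of $\calH'$ that contain $y$ and have trace in $C$, whereas the claimed equality $\calH_f^y=\{g\in \calH' \mid g\cap [<x] \in C \}$ forces every hyperedge of the component to lie in $\calH_f^y$, and by definition $\calH_f^y$ contains only hyperedges $\leq_\calH f$. So $f$ must be the $\leq_\calH$-\emph{maximum} of $\{g\in \calH' \mid g\cap [<x] \in C\}$ (it need not contain $y$), which is exactly what the paper's proof takes. Your appeal to Lemma~\ref{lem:hexinc} is reversed for the same reason: to conclude $\calH_h^{y'} \subseteq \calH_f^y$ that lemma needs $h \leq_\calH f$, which minimality of $f$ denies rather than supplies. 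A concrete failure: let $\calH=\{\{1,2\},\{2,3\}\}$ with $\beta$-elimination ordering $1<2<3$, so $\{1,2\}<_\calH\{2,3\}$; take $e=\{2,3\}$, $x=3$, $S=\emptyset$. Then $\calH'=\calH_e^x=\calH$, the unique component of $\calH'[<x]$ is $C=\{\{1,2\},\{2\}\}$, $y=2$, and your rule selects $f=\{1,2\}$. But $\calH_{\{1,2\}}^{2}=\{\{1,2\}\}$, since $\{2,3\}>_\calH\{1,2\}$ cannot occur on a walk to $\{1,2\}$ through hyperedges $\leq_\calH\{1,2\}$ (equivalently, Lemma~\ref{lem:hexvar} forbids $3\in V(\calH_{\{1,2\}}^2)$); hence neither $C=\calH_f^y[\leq y]$ nor $\calH_f^y=\{g\in\calH'\mid g\cap[<x]\in C\}$ holds. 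With $f=\{2,3\}$ both hold.

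The remainder of your outline is essentially the paper's argument and is sound: $y=\max(V(C))$; the walk/concatenation argument giving $\calH_f^y\subseteq\calH_e^x$ and that every $g\in\calH_f^y$ has $g\cap[<x]\in C$; and the use of Lemma~\ref{lem:hexvar} (namely $V(\calH_f^y)\cap[\geq y]\subseteq f$, $S\subseteq[\geq x]\subseteq[\geq y]$, and $S\nsubseteq f$ because $f\in\calH'$) to show that no hyperedge of $\calH_f^y$ is removed by deleting $\bigcap_{w\in S}\calH_e^x(w)$. Once $f$ is corrected to the maximum, the reverse inclusion does not need Lemma~\ref{lem:hexinc} at all: a path inside $C$ from $f\cap[<x]$ to $g\cap[<x]$ uses only vertices $\leq y$ and hyperedges $\leq_\calH f$ by maximality of $y$ and $f$, and lifts to a walk from $f$ to $g$ in $\calH$, giving $g\in\calH_f^y$ directly.
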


\begin{proof}	
	 Let $y = \max(V(C))$ and $f = \max\{g\in \calH' \mid g\cap [<x] \in C \}$.  We show that $\calH_f^y=\{g\in \calH' \mid g\cap [<x] \in C \}$. 
	 
         First, we observe that $\calH_f^y\subseteq \calH'$. If $S=\emptyset$, it follows from Lemma~\ref{lem:hexinc} because, in this case, $\calH'=\calH_e^x$. Suppose
         that $S\neq \emptyset$, by definition of $\calH'$, we have $S\nsubseteq f$. Moreover, by Lemma~\ref{lem:hexvar}, we have that
         $V(\calH_f^y)\cap [\geq y] \subseteq f$. As $S\subseteq [\geq x]$ and $x>y$, we have $S\subseteq [\geq y]$. Thus $S\nsubseteq V(\calH_f^y)$ and for all
         $g\in \calH_f^y$, we have $S\nsubseteq g$ since $g\subseteq V(\calH_f^y)$. We can conclude that $\calH_f^y\subseteq \calH'$.

	Now, we prove that every $g\in \calH_f^y$, we have $g \cap [<x] \in C$.
 Let $g\in \calH_f^y$. By definition of $\calH_f^y$ and because $\calH_f^y\subseteq \calH'$, there exists a path $P$ from $f$ to $g$ going only through vertices smaller than $y$ and hyperedges smaller than $f$ in $\calH'$. As $y<x$, we can conclude that $f\cap[<x]$ is connected to $g\cap [<x]$ in $\calH'$, \ie, $g \cap [<x] \in C$. 
	In other words, we have $\calH_f^y \subseteq \{g\in \calH' \mid g\cap [<x] \in C \}$.
	
	It remains to prove the other inclusion. Let $g \in \calH'$ with $g\cap [<x] \in C$. Since $C$ is a connected component of $\calH'[<x]$, there exists a path $P$ from $f\cap [<x]$ to $g\cap[<x]$. By the maximality of $y$ and $f$, $P$ goes only through vertices smaller than $y$ and hyperedges smaller than $f$. We can construct from $P$ a path $P'$ from $f$ to $g$ in $\calH'$ going through vertices smaller than $y$ and hyperedges smaller than $f$. As $\calH'\subseteq \calH$, we can conclude that $g\in\calH_f^y$ and thus, $\calH_f^y=\{g\in \calH' \mid g\cap [<x] \in C \}$. 
	Finally, observe that $C =\calH_f^y[<x]= \calH_{f}^y[\leq y]$ since $y=\max(V(C))$.
\end{proof}

\subsection{The algorithm}
\label{sec:algo}

In this subsection, we describe the dynamic programming algorithm we use to count the number of minimal transversals of a $\beta$-acyclic hypergraph.  We denote by $x_1$ the smallest element of $\leq$. 

Our goal is to compute $\#\btr_\emptyset(\calH_e^x,[\leq x])$ and $\#\btr_w(\calH_e^x,[\leq x])$ for every $e \in \calH$, $x \in V(\calH)$ and $w \in V(\calH)$ such that
$x < w$. Observe that it is enough for computing the number of minimal transversals of $\calH$ as $\#\mtr(\calH) = \#\btr_\emptyset(\calH_{e_m}^{x_n},[\leq x_n])$
where $e_m$ is the maximal hyperedge for $\leq_\calH$ and $x_n$ is the maximal vertex for $\leq$. Indeed, we have $\calH_{e_m}^{x_n} = \calH$ and $[\leq x_n]=V(\calH)$, thus
$\btr_\emptyset(\calH_{e_m}^{x_n},[\leq x_n]) = \btr_\emptyset(\calH) = \mtr(\calH)$.

The propagation of the dynamic programming works as follows: we use Theorem~\ref{thm:inductivestep} to reduce the computation of $\#\btr_\calB(\calH_e^x,[\leq x])$ to the
computation of $\#\btr$ for several hypergraphs that do not contain $x$. We then use Lemma~\ref{lem:betadecompo} to show that these hypergraphs can be decomposed into
disjoint hypergraphs of the form $\calH_f^y$ for $f \leq_\calH e$ and $y < x$ which allows us to compute $\#\btr_\calB(\calH_e^x,[\leq x])$ from precomputed values of the
form $\#\btr_{\calB'}(\calH_f^y,[\leq y])$, where $\calB'\in \{\calB, \{x\}\}$.

Before continuing, let us give a high-level description of the algorithm. For each $1\leq i \leq n$ and each $1\leq j\leq m$, let $tab[i,j,0]$ be
$\#\btr_\emptyset(\cH_{e_j}^{x_i},[\leq x_i])$, and for each $\ell>i$, let $tab[i,j,\ell]$ be $\#\btr_{x_\ell}(\cH_{e_j}^{x_i},[\leq x_i])$. Because the number of minimal
transversals of $\cH$ is $\#\btr_\emptyset(\cH_{e_m}^{x_n},[\leq x_n])$, it is enough to show how to compute $tab[n,m,0]$ in polynomial time. The following is a high
level description of the algorithm which computes $tab[i,j,\ell]$, for all $1\leq i\leq n$, $1\leq j\leq m$ and $\ell\in \{0,i+1,\ldots,n\}$.
  \begin{framed}
\vspace{-5mm}
\begin{small}
\begin{tabbing}
{\bf Algorithm} {\sf CountMinTransversals}$(\cH)$\\
\ \ \ \ \ \ \ $\cH$: a $\beta$-acyclic hypergraph\\
1. Let $x_1,\ldots,x_n$ a $\beta$-elimination ordering of $\cH$.\\
2. Let $e_1,\ldots,x_m$ the induced lexicographic ordering on $\cH$.\\
3. Precompute $\cH_{e_j}^{x_i},[\leq x_i]$ for every $i \leq n,j \leq m$. \\
4. {\bf for} $1\leq i\leq n$ and $i<\ell \leq n$ {\bf do}\\
5. \ \ \ {\bf for} $1\leq j \leq m$ {\bf do}\\
6. \ \ \ \ \ \ Compute $tab[i,j,0]$ from the recursive formula of $\#\btr_\emptyset(\cH_{e_j}^{x_i},[\leq x_i])$.\\
7. \ \ \ \ \ \ Compute $tab[i,j,\ell]$ from the recursive formula of $\#\btr_{x_\ell}(\cH_{e_j}^{x_i},[\leq x_i])$.\\
8. \ \ \ {\bf end for}\\
9. {\bf end for}\\
10. {\bf return} $tab[n,m,0]$
\end{tabbing}
\end{small}
\vspace{-5mm}
\end{framed}

In order to ease the presentation, the computation of $\#\btr_\emptyset(\cH_e^x,[\leq x])$ and that of $\#\btr_w(\cH_e^x,[\leq x])$ are separated, even though many of the arguments are similar.

\subsubsection{Base cases.} \label{subsubsec:base} We observe that for every $e \in \calH$, $\calH_e^{x_1}[\leq x_1]$ is either equal to $\{x_1\}$ or $\{\emptyset\}$. Thus, for every $e \in \calH$ and $w \in V(\calH)$ such that $w > x_1$, we can compute $\#\btr_\emptyset(\calH_e^{x},[\leq x_1])$ and $\#\btr_{w}(\calH_e^{x},[\leq x_1])$ in time $O(1)$.

\subsubsection{Computing $\#\btr_\emptyset(\calH_e^{x},[\leq x])$ by dynamic programming.} \label{subsubsec:empty}

We start by explaining how we can compute $\#\btr_\emptyset(\calH_e^{x},[\leq
x])$ in polynomial time if the values $\#\btr_\emptyset(\calH_f^{y},[\leq y])$ and $\#\btr_w(\calH_f^{y},[\leq y])$ have been precomputed for $f \leq_\calH e$ and $y < x$, $y<w$. 

We start by applying Theorem~\ref{thm:inductivestep}.
\[ 
\begin{aligned}
\#\btr_\emptyset(\calH_e^{x},[\leq x]) & =  \#\btr_\emptyset(\calH_e^{x}, [< x])  \\
& + \#\btr_{\emptyset}(\calH_e^{x} \setminus \calH_e^{x}(x) , [< x] )  \\
& - \#\btr_{x}(\calH_e^{x}, [< x]).
\end{aligned}
\]

Now, let $C_1, \ldots, C_k$ be the connected components of $\calH_e^{x}[< x]$. If there exists $i$ such that $C_i=\{\emptyset\}$, then
$\#\btr_\emptyset(\calH_e^{x}, [< x]) = \#\btr_{x}(\calH_e^{x}, [< x]) = 0$.  Otherwise, by applying Lemma~\ref{lem:betadecompo} with $S=\emptyset$, there exists, for
each $1\leq i \leq k$, $y_i < x$ and $f_i \leq_\calH e$ such that $\calH_{f_i}^{y_i}=\{g\in \calH_e^f \mid g\cap [<x] \in C_i\}$ and $C_i = \calH_{f_i}^{y_i}[\leq
y_i]$. By Lemma~\ref{lem:btrcc},
\begin{align*}
\#\btr_\emptyset(\calH_e^{x}, [< x]) & = \prod_{i=1}^k \#\btr_{\emptyset}(\calH_{f_i}^{y_i} , [\leq y_i]),\\
\#\btr_{x}(\calH_e^{x}, [< x])  = &
\prod_{i=1}^k \#\btr_{x}(\calH_{f_i}^{y_i} , [\leq y_i]) .
\end{align*}


We now show how to decompose $\#\btr_{\emptyset}(\calH_e^{x} \setminus \calH_e^{x}(x) , [< x] )$ into a product of precomputed values. Let $D_1,\ldots,D_l$ be the
connected components of $\calH_e^{x} \setminus \calH_e^{x}(x)[<x]$. If there exists $i$ such that $D_i=\{\emptyset\}$, then
$\#\btr_{x}(\calH_e^{x} \setminus \calH_e^{x}(x), [< x]) =0$.  Otherwise, if we apply Lemma~\ref{lem:betadecompo} with $S = \{x\}$, then there exists, for each
$1\leq i \leq l$, $y_i < x$ and $f_i \leq_\calH e$ such that $\calH_{y_i}^{f_i}= \{g\in \calH_e^{x} \setminus \calH_e^{x}(x) \mid g\cap [<x] \in D_i \}$ and
$D_i = \calH_{y_i}^{f_i}[\leq y_i]$. We can thus conclude by Lemma~\ref{lem:btrcc} that
\[ \#\btr_{\emptyset}(\calH_e^{x} \setminus \calH_e^{x}(x) , [< x] ) = \prod_{i=1}^l \#\btr_{\emptyset}(\calH_{f_i}^{y_i} , [\leq y_i]).\]

Therefore, if $\#\btr_{\emptyset}(\calH_{f}^{y} , [\leq y])$ and $\#\btr_{x}(\calH_{f}^{y} , [\leq y])$ have already been computed for every $f <_\calH e$ and $y \leq x$,
we can compute $\#\btr_\emptyset(\calH_e^{x},[\leq x])$ with at most $3 \times |\calH_e^x|$ additional multiplications and $3$ additions.

\subsubsection{Computing $\#\btr_w(\calH_e^{x},[\leq x])$ by dynamic programming.}\label{subsubsec:block} Let $x \leq w$. By Theorem~\ref{thm:inductivestep}, we have:
\[ 
\begin{aligned}
\#\btr_w(\calH_e^{x},[\leq x]) & = \#\btr_{w}(\calH_e^{x},[< x])  \\ 
& + \#\btr_{w}(\calH_e^{x}\setminus \calH_e^x(x),[< x])  \\
& - \#\btr_{\{x,w\}}(\calH_e^{x}\setminus (\calH_e^x(x) \cap \calH_e^x(w)), [< x] ).
\end{aligned}
\]

We start by explaining how to compute $\#\btr_{w}(\calH_e^{x},[< x]) $. Let $C_1,\dots,C_k$ be the connected components of $\calH_e^x[<x]$. If there exists $i$ such that
$C_i=\{\emptyset\}$, then $\#\btr_{w}(\calH_e^{x},[< x])=0$.  Otherwise, by Lemma~\ref{lem:betadecompo} with $S=\{w\}$, there exists, for each $1\leq i\leq k$,
$f_i \leq_\calH e$ and $y_i < x$ such that $\calH_{f_i}^{y_i}= \{g\in \calH_e^{x} \mid g\cap [<x]\in C_i \}$ and $C_i = \calH_{f_i}^{y_i}[\leq y_i]$. By
Lemma~\ref{lem:btrcc}, we can conclude that 
\[ \#\btr_{w}(\calH_e^{x},[< x]) = 
\prod_{i=1}^k \#\btr_{w}(\calH_{f_i}^{y_i}, [\leq y_i]). \]


We now explain how to compute $ \#\btr_{w}(\calH_e^{x}\setminus \calH_e^x(x),[< x])$. Let $D_1,\dots,D_l$ be the connected components of $(\calH_e^{x}\setminus \calH_e^x(x))[<x]$. 
If there exists $i$ such that $D_i=\{\emptyset\}$, then $\#\btr_{w}(\calH_e^{x}\setminus \calH_e^x(x),[< x]) = 0$. 
Otherwise, by applying Lemma~\ref{lem:betadecompo} with $S = \{x\}$, it follows that for every $1\leq i\leq l$, there exists $y_i < x$ and $f_i \leq_\calH e$ such that $\calH_{f_i}^{y_i} = \{g\in \calH_e^{x}\setminus \calH_e^x(x) \mid g\cap [<x]\in D_i \}$ and $D_i = \calH_{f_i}^{y_i}[\leq y_i]$.
Thus, from  Lemma~\ref{lem:btrcc},
\[  \#\btr_{w}(\calH_e^{x}\setminus \calH_e^x(x),[< x])  = 
\prod_{i=1}^l \#\btr_{w}(\calH_{f_i}^{y_i}, [\leq y_i]).\]

Finally, we explain how to decompose $\#\btr_{\{x,w\}}(\calH_e^{x}\setminus (\calH_e^x(x) \cap \calH_e^x(w)), [< x] )$ into a product of pre-computed values. To ease the
notation, we denote $\calH_e^{x}\setminus (\calH_e^x(x) \cap \calH_e^x(w))$ by $\calH'$.  Let $K_1,\dots,K_r$ be the connected components of $\calH'[<x]$. If there exists
$i$ such that $K_i=\{\emptyset \}$, then $\#\btr_{\{x,w\}}(\calH', [< x] )=0$.  Otherwise, by Lemma~\ref{lem:betadecompo} applied with $S = \{x,w\}$, we have that for every
$i$, there exists $y_i < x$ and $f_i \leq_\calH e$ such that $\calH_{f_i}^{y_i}= \{g\in \calH' \mid g\cap [<x] \in K_i \}$ and $K_i = \calH_{f_i}^{y_i}[\leq y_i]$. By
Lemma~\ref{lem:btrcc}, we have:
\[ \#\btr_{\{x,w\}}(\calH_e^{x}\setminus (\calH_e^x(x) \cap \calH_e^x(w)), [< x] ) =
 \prod_{i=1}^r \#\btr_{\{x,w\}\cap V(\calH_{f_i}^{y_i})}(\calH_{f_i}^{y_i},[\leq y_i]). \]

\begin{claim}
\label{clm:bcapci3}
  For every $i \leq p$, $ \{x,w\}\cap V(\calH_{f_i}^{y_i})\neq \{x,w\}$.
\end{claim}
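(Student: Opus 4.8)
The plan is to combine two ingredients: Lemma~\ref{lem:hexvar}, which forces every vertex of $\calH_{f_i}^{y_i}$ that is $\geq y_i$ to sit inside the single hyperedge $f_i$; and the fact that $f_i$ is itself a hyperedge of $\calH'$, where by construction no hyperedge contains both $x$ and $w$.

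Concretely, I would fix $i$ and let $y_i < x$ and $f_i \leq_\calH e$ be the vertex and hyperedge supplied by Lemma~\ref{lem:betadecompo}, so that $f_i \in \calH_{f_i}^{y_i} = \{g \in \calH' \mid g \cap [<x] \in K_i\} \subseteq \calH'$. First I would observe that $x, w \in [\geq y_i]$: indeed $y_i < x$ by the lemma, and $x < w$ since $w$ is the vertex fixed in this subsection, so $y_i < x < w$. By Lemma~\ref{lem:hexvar}, $V(\calH_{f_i}^{y_i}) \cap [\geq y_i] \subseteq f_i$, hence any member of $\{x,w\}$ that is a vertex of $\calH_{f_i}^{y_i}$ must belong to $f_i$. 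Therefore, were $\{x,w\} \cap V(\calH_{f_i}^{y_i}) = \{x,w\}$, we would get $\{x,w\} \subseteq f_i$, that is, $f_i \in \calH_e^x(x) \cap \calH_e^x(w)$ (using $f_i \in \calH' \subseteq \calH_e^x$). This contradicts $f_i \in \calH' = \calH_e^x \setminus (\calH_e^x(x) \cap \calH_e^x(w))$, and so $\{x,w\} \cap V(\calH_{f_i}^{y_i}) \neq \{x,w\}$.

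This argument is essentially complete and I do not anticipate a real obstacle; the only points needing a moment's care are that $f_i$ genuinely is a hyperedge of $\calH'$ (which holds because $f_i \in \calH_{f_i}^{y_i}$ and $\calH_{f_i}^{y_i} \subseteq \calH'$ by Lemma~\ref{lem:betadecompo}) and that $w$ is strictly larger than $x$, so that $w \in [\geq y_i]$. Note also that it is irrelevant whether $x$ or $w$ enters $V(\calH_{f_i}^{y_i})$ through some hyperedge other than $f_i$: Lemma~\ref{lem:hexvar} pins such a vertex into $f_i$ regardless, which is exactly what makes the final contradiction go through.
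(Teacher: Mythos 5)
Your proof is correct and follows the same route as the paper: apply Lemma~\ref{lem:hexvar} to $\calH_{f_i}^{y_i}$ to force $\{x,w\}\subseteq f_i$, then contradict $f_i\in\calH'=\calH_e^x\setminus(\calH_e^x(x)\cap\calH_e^x(w))$. Your write-up is in fact slightly more careful than the paper's (which writes $f$ for $f_i$ and does not spell out that $x,w\in[\geq y_i]$).
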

\begin{proof}
Assume towards a contradiction that $\{x,w\}\cap V(\calH_{f_i}^{y_i})= \{x,w\}$. Recall that $\calH_{f_i}^{y_i}[\leq y_i]$. By Lemma \ref{lem:hexvar}, $\{x,w\}\subseteq V(\calH_{f_i}^{y_i})$ implies $\{x,w\}\subseteq f$. Thus, we have $f\in \calH_e^x(x)\cap\calH_e^x(w)$. This is a contradiction, since $f\in \calH_{f_i}^{y_i} \subseteq \calH_e^x \setminus (\calH_e^x(w) \cap \calH_e^x(x)) $.
\end{proof}

Thus, $\{x,w\}\cap V(\calH_{f_i}^{y_i})$ equals either $\{x\}$, or $\{w\}$ or $\emptyset$ by Claim \ref{clm:bcapci3}. That is, we can compute
$\#\btr_{\{x,w\}}(\calH_e^{x}\setminus (\calH_e^x(x) \cap \calH_e^x(w)), [< x] )$ from precomputed terms.

We can conclude that, if $\#\btr_{\emptyset}(\calH_{f}^{y} , [\leq y])$ and $\#\btr_w(\calH_{f}^{y} , [\leq y])$ have already been computed for every $f <_\calH e$ and
$y \leq w$, we can compute $\#\btr_w(\calH_e^{x},[\leq x])$ with at most $3 \times |\calH_e^x|$ additional multiplications and $3$ additions.

\medskip

It is easy to see that a straightforward greedy algorithm can be used to compute a $\beta$-elimination ordering in polynomial time (see \cite{PaigeT87} for a better
algorithm due to Paige and Tarjan). Moreover, the dynamic programming algorithm describes above computes at most $O(n^2|\calH|)$ terms and each of them can be computed
from the others with a polynomial number of arithmetic operations. Finally, all these terms can be bounded by $2^n$ since they are all the cardinals of some collection of
subsets of the vertices. Thus these arithmetic operations can be done in polynomial time in the size of the input. It follows.

\begin{theorem}[Main Theorem]\label{thm:main}
  Let $\calH$ be a $\beta$-acyclic hypergraph. One can compute in polynomial time the number of minimal transversals of $\cH$.
\end{theorem}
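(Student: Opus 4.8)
The plan is to assemble the pieces already developed in the paper into a dynamic programming algorithm, and then argue that it runs in polynomial time. The final statement (Theorem~\ref{thm:main}) asserts that $\#\mtr(\cH)$ can be computed in polynomial time for a $\beta$-acyclic hypergraph $\cH$, and essentially all the conceptual work has been done: Theorem~\ref{thm:inductivestep} gives the inductive step that peels off the vertex $x$ from $\btr_\calB(\cH_e^x,[\leq x])$, reducing it to three terms over hypergraphs not containing $x$; Lemma~\ref{lem:betadecompo} together with Lemma~\ref{lem:btrcc} shows that each of those three terms decomposes as a product of $\#\btr$ values over strictly smaller hypergraphs $\calH_f^y$ with $y<x$ and $f\leq_\calH e$; and the base case is handled explicitly for $x_1$. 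So the proof proper is just a correctness-plus-running-time argument for the dynamic programming table.

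First I would fix a $\beta$-elimination ordering $x_1 < \dots < x_n$ (computable greedily in polynomial time, citing \cite{PaigeT87} for the efficient version) and the induced lexicographic ordering $e_1 \leq_\calH \dots \leq_\calH e_m$ on the hyperedges. I would then define the table $tab[i,j,\ell]$ for $1\leq i\leq n$, $1\leq j\leq m$, and $\ell \in \{0,i+1,\dots,n\}$ by $tab[i,j,0] = \#\btr_\emptyset(\cH_{e_j}^{x_i},[\leq x_i])$ and $tab[i,j,\ell] = \#\btr_{x_\ell}(\cH_{e_j}^{x_i},[\leq x_i])$ for $\ell>i$. Correctness of the algorithm follows by induction on $i$: the base case $i=1$ is Section~\ref{subsubsec:base}, and the inductive step is exactly the content of Sections~\ref{subsubsec:empty} and~\ref{subsubsec:block}, which express each $tab[i,j,\ell]$ as a fixed arithmetic combination (three additions, at most $3|\calH_{e_j}^{x_i}|$ multiplications) of entries $tab[i',j',\ell']$ with $i'<i$. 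One must check that the hypergraphs $\calH_{f}^{y}$ produced by Lemma~\ref{lem:betadecompo} indeed satisfy $y<x_i$ and $f\leq_\calH e_j$ so that the referenced entries are already filled — this is stated in the lemma — and that the blocking set in each factor is one of $\emptyset$, $\{x_i\}$, or $\{x_\ell\}$ (for the last, the third term of Theorem~\ref{thm:inductivestep} uses blocking set $\{x_i,w\}$, but Claim~\ref{clm:bcapci3} guarantees that after intersecting with $V(\calH_{f_i}^{y_i})$ it collapses to a single-element or empty set, so it matches a table entry). Finally, since $\#\mtr(\cH) = \#\btr_\emptyset(\cH) = \#\btr_\emptyset(\calH_{e_m}^{x_n},[\leq x_n]) = tab[n,m,0]$ (using $\calH_{e_m}^{x_n}=\calH$ and $[\leq x_n]=V(\calH)$), the algorithm returns the desired quantity.

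For the running time I would argue as follows. The precomputation of all hypergraphs $\calH_{e_j}^{x_i}$ and the sets $[\leq x_i]$ is polynomial since each $\calH_{e_j}^{x_i}$ is obtained by a walk-closure computation in $\calH$. The table has $O(n^2 m) = O(n^2|\calH|)$ entries, and filling each entry requires computing the connected components of an induced hypergraph (polynomial), identifying the pieces $\calH_{f}^{y}$ via Lemma~\ref{lem:betadecompo}, looking up a polynomial number of previously computed entries, and performing $O(|\calH|)$ arithmetic operations. Every value stored is the cardinality of a family of subsets of $V(\calH)$, hence bounded by $2^n$, so every intermediate integer has $O(n)$ bits and each arithmetic operation (addition, multiplication) on such numbers takes polynomial time. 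Multiplying these bounds together gives an overall polynomial running time, completing the proof.

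I expect the main obstacle in writing this cleanly to be bookkeeping rather than mathematics: one must be careful that the decomposition of the three terms of Theorem~\ref{thm:inductivestep} always lands on table entries that actually exist, in particular handling the edge cases where some connected component equals $\{\emptyset\}$ (forcing the relevant count to $0$, as noted throughout Section~\ref{sec:algo}) and reconciling the two-element blocking set $\{x,w\}$ arising in Section~\ref{subsubsec:block} with the one-element blocking sets the table is indexed by, which is precisely why Claim~\ref{clm:bcapci3} is needed. Since the excerpt already spells all of this out, the proof of Theorem~\ref{thm:main} itself can be kept short: it is the concluding paragraph that observes the algorithm computes $O(n^2|\calH|)$ terms, each from the others with polynomially many arithmetic operations on $O(n)$-bit integers, hence in total polynomial time.
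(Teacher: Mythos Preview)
Your proposal is correct and follows essentially the same approach as the paper: the paper's own proof of Theorem~\ref{thm:main} is precisely the short concluding paragraph you anticipate, observing that a $\beta$-elimination ordering is computable in polynomial time, that the dynamic programming fills $O(n^2|\calH|)$ table entries each with polynomially many arithmetic operations, and that all intermediate values are bounded by $2^n$ so the arithmetic is on $O(n)$-bit integers. You have also correctly identified the one subtle point (Claim~\ref{clm:bcapci3} collapsing the two-element blocking set $\{x,w\}$ to a singleton or empty set so that it matches a table index), which the paper handles in exactly the same way.
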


\section{Applications to the counting of Dominating Sets}
\label{sec:dom}

We refer to \cite{Diestel05} for our graph terminology. For a graph $G$, let $V(G)$ be its set of vertices and $E(G)$ be its set of edges. For a vertex $x$ of a graph
$G$, let $N(x)$ be the set of neighbours of $x$ and we let $N[x]$ be the set $\{x\}\cup N(x)$. The \emph{closed neighbourhood hypergraph} of $G$, denoted by $\cN[G]$, is
the hypergraph $\{N[x]\mid x\in V(G)\}$. A \emph{dominating set} in a graph $G$ is a transversal of $\cN[G]$. \textsc{Dominating Set} problems are classic and
well-studied graph problems, and has applications in many areas such as networks and graph theory \cite{HaynesHS98}. 

In \cite{KanteLMN14} the authors reduce the \textsc{Hypergraph Dualisation} problem into the enumeration of minimal dominating sets, showing that the two problems are equivalent in the area of enumeration problems
(a fact already established in the case of optimisation).  The reduction indeed shows that the counting versions are equivalent (under Turing reductions), and such a
reduction is of big interest because it allows to study counting and enumeration problems associated with the \textsc{Hypergraph Dualisation} in the perspectives of graph
theory, where tools had been developed to tackle combinatorial problems.

Despite the broad application of counting the minimal dominating sets in (hyper)graphs, the problem was not investigated until recently, except in \cite{Courcelle06}
where it is proved that the models of any monadic second-order formula can be counted in polynomial time in graphs of bounded clique-width.  
Indeed, as far as we know the counting of minimal dominating sets is only considered in \cite{KanteLMNU13,GolovachHKKSV17,KanteU17}.
This problem is known to be polynomial on interval graphs and permutation graphs \cite{KanteLMNU13}. 
However, the systematic study of its computational complexity in graph classes is only considered in \cite{KanteU17}, where the authors proved the $\#$P-completeness in several graph classes and asked whether the following dichotomy conjecture is true. A $k$-sun is a graph obtained from a cycle of length $2k$ ($k\geq 3$) by adding edges to make the even-indexed vertices pairwise adjacent. A graph is \emph{chordal} if it does not contain cycles of length at least $4$ as induced subgraphs. 

\begin{conjecture}\label{conj:dichotomy-chordal}
  Let $\calC$ be a class of chordal graphs. If $\calC$ does not contain a $k$-sun as an induced subgraph, for $k\geq 4$, then one can count in polynomial time the number
  of minimal dominating sets of any graph in $\calC$. Otherwise, the problem is $\#$P-complete.
\end{conjecture}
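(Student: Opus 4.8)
This is the concluding conjecture of the paper; only the part of it subsumed by the Main Corollary is within reach of the techniques developed above, so I describe how I would prove that part and sketch what would be needed for the rest.

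For the tractable direction I would first invoke the standard dictionary between domination and transversals: a set $D\subseteq V(G)$ is a dominating set of $G$ exactly when $D$ is a transversal of the closed neighbourhood hypergraph $\cN[G]=\{N[x]\mid x\in V(G)\}$, and this correspondence preserves minimality, so the number of minimal dominating sets of $G$ equals $\#\mtr(\cN[G])$. The second ingredient is the classical characterisation of strongly chordal graphs due to Farber: $G$ is strongly chordal if and only if it admits a \emph{strong elimination ordering}, and a strong elimination ordering of $G$ is exactly a $\beta$-elimination ordering of $\cN[G]$; equivalently, $G$ is strongly chordal if and only if $\cN[G]$ is $\beta$-acyclic. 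Since the strongly chordal graphs are precisely the chordal graphs with no induced $k$-sun for any $k\ge 3$, applying Theorem~\ref{thm:main} to $\cN[G]$ yields the Main Corollary, which already settles the conjecture for every class $\calC$ that excludes all suns.

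To reach the full positive side of the conjecture — where induced $3$-suns are still permitted — I would try to push the elimination-ordering dynamic programming of Section~\ref{sec:algo} through on $\cN[G]$ even though it need not be $\beta$-acyclic. The plan would be to show that a chordal graph with no induced $k$-sun for $k\ge 4$ admits a vertex ordering whose only violations of the $\beta$-elimination condition are confined to bounded, pairwise ``decoupled'' pieces arising from $3$-suns: at each bad step only a constant number of hyperedges of $\cN[G]$ fail to be nested, these can be handled by brute force while maintaining the blocked-transversal counters, and one then still needs only an analogue of Lemma~\ref{lem:betadecompo} to split $\cN[G]$ along the ordering and Lemma~\ref{lem:btrcc} to multiply over connected components, keeping the number of intermediate $\#\btr$ values polynomial. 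For the $\#$P-completeness side one would lift the reductions of \cite{KanteU17}, reducing the counting of minimal transversals of an arbitrary hypergraph to counting minimal dominating sets in a graph engineered to realise the relevant $k$-sun, and argue that this goes through uniformly for every hereditary class $\calC$ containing some $k$-sun with $k\ge 4$.

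The hard part is precisely the treatment of $3$-suns on the tractable side: the closed neighbourhood hypergraph of a single $3$-sun is already not $\beta$-acyclic, so Theorem~\ref{thm:inductivestep} together with the component splitting of Lemma~\ref{lem:btrcc} does not suffice on its own, and it is far from clear that chordal graphs excluding only the larger suns admit an elimination ordering whose failures stay local and decoupled. I expect the full conjecture to demand a genuinely new structural understanding of large-sun-free chordal graphs, beyond the $\beta$-acyclic toolkit of this paper, which is why it is stated here as an open problem.
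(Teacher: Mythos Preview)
Your proposal is correct and matches the paper's treatment: the statement is a conjecture that the paper explicitly leaves open, and you rightly identify that only the sun-free (strongly chordal) subcase is established here, via exactly the argument the paper gives for the Main Corollary (domination $\leftrightarrow$ transversals of $\cN[G]$, Farber's characterisation making $\cN[G]$ $\beta$-acyclic, then Theorem~\ref{thm:main}). Your speculative sketch for handling $3$-suns and for the $\#$P-hard direction goes beyond anything the paper attempts, and you correctly flag it as such; the paper simply states the conjecture and presents the strongly chordal case as a first step.
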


This conjecture is motivated by the recursive structure of the chordal graphs without $k$-sun, for $k\leq 4$.

We make a first step towards a proof of the first statement of the conjecture and provide a polynomial time algorithm for computing the minimal dominating sets in \emph{strongly chordal graphs}, which are exactly chordal graphs without $k$-suns, for $k\geq 3$.

\begin{corollary}[Main Corollary] Let $G$ be a strongly chordal graph. One can count in polynomial time the number of minimal dominating sets of $G$.
\end{corollary}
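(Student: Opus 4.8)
The plan is to reduce the statement to the Main Theorem through the closed neighbourhood hypergraph. By the definitions of Section~\ref{sec:dom}, a dominating set of $G$ is precisely a transversal of $\cN[G]$, and since inclusion-minimality is the same notion on both sides, the minimal dominating sets of $G$ are exactly the elements of $\mtr(\cN[G])$; in particular their number equals $\#\mtr(\cN[G])$. Since $\cN[G]$ can be computed from $G$ in polynomial time and $V(\cN[G]) = V(G)$, it suffices, by Theorem~\ref{thm:main}, to prove that $\cN[G]$ is $\beta$-acyclic whenever $G$ is strongly chordal.

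For this I would rely on the classical structure theory of strongly chordal graphs. Farber's theorem states that a graph is strongly chordal if and only if it admits a \emph{strong elimination ordering} $v_1,\ldots,v_n$, meaning that each $v_i$ is a \emph{simple} vertex of the induced subgraph $G_i := G[\{v_i,\ldots,v_n\}]$: the closed neighbourhoods in $G_i$ of the vertices belonging to $N[v_i]$ form a chain under inclusion. The key step is to verify that such an ordering, or its reverse, is a $\beta$-elimination ordering of $\cN[G]$. Fix $v_i$; the hyperedges of $\cN[G]$ meeting $v_i$ are exactly the closed neighbourhoods $N[v_k]$ with $v_k \in N[v_i]$. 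For $k \ge i$ one has $N[v_k] \cap V(G_i) = N_{G_i}[v_k]$ and $v_k \in N_{G_i}[v_i]$, so these restrictions are linearly ordered by inclusion by simplicity of $v_i$. It then remains to slot in the restrictions $N[v_k] \cap V(G_i)$ coming from vertices $v_k$ with $k < i$ that are still adjacent to $v_i$, and this is where the compatibility built into a strong elimination ordering, over and above a mere perfect elimination ordering, is used. Equivalently, one can bypass this bookkeeping and invoke the classical characterisation (Farber; Anstee--Farber) that $G$ is strongly chordal if and only if the closed neighbourhood matrix of $G$ is totally balanced, together with the well-known equivalence between totally balanced hypergraphs and $\beta$-acyclic hypergraphs.

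Once $\cN[G]$ is known to be $\beta$-acyclic, Theorem~\ref{thm:main} computes $\#\mtr(\cN[G])$ in polynomial time, and this is exactly the number of minimal dominating sets of $G$. I expect the only genuinely delicate point to be the structural lemma above --- concretely, checking that the ``old'' hyperedges $N[v_k]$ with $k<i$ that still contain $v_i$ lie in the same inclusion-chain as the restricted closed neighbourhoods $N_{G_i}[v_k]$. This is precisely where strong chordality is needed rather than chordality alone, and deferring to the total-balancedness characterisation is the cleanest way to handle it.
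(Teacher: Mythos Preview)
Your proposal is correct and follows exactly the paper's route: identify minimal dominating sets of $G$ with $\mtr(\cN[G])$, use that $\cN[G]$ is $\beta$-acyclic when $G$ is strongly chordal, and apply Theorem~\ref{thm:main}. The only difference is cosmetic: the paper dispatches the $\beta$-acyclicity of $\cN[G]$ in one line by citing the literature (\cite{BrandstadtLS99}), whereas you sketch two possible justifications (strong elimination orderings, or the totally balanced characterisation); either works, and your suggestion to invoke the totally-balanced/$\beta$-acyclic equivalence is indeed the cleanest way to avoid the bookkeeping you flagged.
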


\begin{proof} Let $G$ be a strongly chordal graph. It is well-known that the hypergraph $\cN[G]$ is $\beta$-acyclic
  \cite{BrandstadtLS99}. By Theorem \ref{thm:main}, one can count in polynomial time the minimal transversals of $\cN[G]$, which are exactly the minimal dominating sets
  of $G$. 
\end{proof}

\section{Conclusion}\label{sec:conclusion}

We proposed a polynomial time algorithm for counting the minimal transversals of any $\beta$-acyclic hypergraph, supporting Conjecture \ref{conj:dichotomy-chordal}, and it seems that the technique can be easily
adapted to consider (inclusion-wise) minimal $d$-dominating sets, which are dominating sets such that each vertex is dominated by at least $d$ vertices \cite{GolovachHKKSV17}. 

Besides resolving Conjecture \ref{conj:dichotomy-chordal}, there are two immediate questions that deserve to be considered. Firstly, can we count the minimal models of
any non-monotone $\beta$-acyclic formula in polynomial time? Secondly, for which graph classes and counting graph problems the techniques of this paper apply?

\bibliographystyle{elsarticle-num} 
\bibliography{biblio}

\end{document}